\def\colorful{1}
\newcommand{\TopDownError}{\textsc{TopDownError}}
\newcommand{\BuildTopDownDT}{\textsc{BuildTopDownDT}}
\newcommand{\bias}{\mathrm{bias}}
\newcommand{\score}{\mathrm{score}} 
\newcommand{\puritygain}{\mathsf{purity\_gain}} 
\renewcommand{\puritygain}{\mathscr{G}\text{-}\mathrm{purity}\text{-}\mathrm{gain}} 
\newcommand{\Tribes}{\textsc{Tribes}}
\newcommand{\Maj}{\textsc{Maj}}
\newcommand{\Rest}{\textsc{Rest}}
\newcommand{\Gimpurity}{\mathscr{G}\text{-}\mathrm{impurity}}
\newcommand{\pparagraph}[1]{\bigskip \noindent {\bf {#1}}}
\newtheorem*{rep@theorem}{\rep@title}
\newcommand{\newreptheorem}[2]{
\newenvironment{rep#1}[1]{
 \def\rep@title{#2 \ref{##1}}
 \begin{rep@theorem}\itshape}
 {\end{rep@theorem}}}
\theoremstyle{plain}
\newtheorem*{rep@claim}{\rep@title}
\newcommand{\newrepclaim}[2]{
\newenvironment{rep#1}[1]{
 \def\rep@title{#2 \ref{##1}}
 \begin{rep@claim}\itshape}
 {\end{rep@claim}}}
\theoremstyle{plain}
\begin{document}

\title{Provable guarantees for decision tree induction: \\
the agnostic setting\vspace{10pt}}

\author{Guy Blanc \and \hspace{10pt} Jane Lange \vspace{10pt} \\
\hspace{6pt} {\small {\sl Stanford University}\vspace{10pt}}
\and Li-Yang Tan}

\date{\small{\today}}

\maketitle

\begin{abstract}
We give strengthened provable guarantees on the performance of widely employed and empirically successful {\sl top-down decision tree learning heuristics}.  While prior works have focused on the realizable setting, we consider the more realistic and challenging {\sl agnostic} setting.  We show that for all monotone functions~$f$ and parameters $s\in \N$, these heuristics construct a decision tree  of size $s^{\tilde{O}((\log s)/\eps^2)}$ that achieves error $\le \opt_s + \eps$, where $\opt_s$ denotes the error of the optimal size-$s$ decision tree for $f$.  Previously, such a guarantee was not known to be achievable by any algorithm, even one that is not based on top-down heuristics.  We complement our algorithmic guarantee with a near-matching $s^{\tilde{\Omega}(\log s)}$ lower bound. 
\end{abstract}

\thispagestyle{empty}

\newpage
\setcounter{page}{1}

\section{Introduction}

This paper is motivated by the goal of establishing provable guarantees for the class of popular and empirically successful {\sl top-down decision tree learning heuristics}.  This includes well-known instantiations such as ID3~\cite{Qui86}, its successor C4.5~\cite{Qui93}, and CART~\cite{Bre17}, all widely employed in everyday machine learning applications.  These simple heuristics are also at the heart of more sophisticated decision-tree-based algorithms such as random forests~\cite{Bre01} and XGBoost~\cite{CG16}, which have quickly gained prominence in Kaggle and other data science competitions, and achieve state-of-the-art performance for diverse tasks.  
%\gray{Quoting Kearns and Mansour, whose work initiated the theoretical study of these heuristics, ``In experimental and applied machine learning work, it is hard to exaggerate the influence of top-down heuristics for building a decision tree from labeled sample data [...] The tremendous popularity of such programs (which include the C4.5 and CART software packages) is due to their efficiency and simplicity, the advantages of using decision trees (such as potential interpretability to humans), and of course, to their success in generating trees with good generalization ability"~\cite[p.~109]{KM99}.}

We will soon formally describe the learning-theoretic framework within which we study these heuristics,  mentioning for now that they build a decision tree $T$ for a binary classifier $f : \R^n \to \zo$ in a {\sl  greedy}, {\sl top-down} fashion: 
\begin{enumerate}
\item Query $\Ind[x_i \ge \theta]$ at the root of $T$, where $x_i$ and $\theta$ are chosen to maximize the {\sl purity gain} 
\[     \mathscr{G}(\E[f]) - \big(\Pr[\bx_i\ge \theta  ]\cdot \mathscr{G}(\E[f_{x_i \ge \theta}])  + \Pr[\bx_i < \theta]\cdot  \mathscr{G}(\E[f_{x_i < \theta }])\big)
\]  
with respect to an {\sl impurity function} $\mathscr{G} : [0,1] \to [0,1]$ and a distribution over labeled examples $(\bx,f(\bx))$.  This carefully chosen function $\mathscr{G}$ encapsulates the {\sl splitting criterion} of the heuristic. 
\item Build the left and right subtrees of $T$ by recursing on $f_{x_i\ge\theta}$ and $f_{x_i<\theta}$ respectively. 
\end{enumerate} 
Different instantiations of this simple approach are distinguished by different impurity functions $\mathscr{G}$, which determine the {\sl order} in which the recursive calls are made.  For example, ID3 and C4.5 uses the binary entropy function $\mathscr{G}(p) = \mathrm{H}(p)$; CART uses the {\sl Gini criterion} $\mathscr{G}(p) = 4p(1-p)$;  Kearns and Mansour proposed and analyzed the function $\mathscr{G}(p) = 2\sqrt{p(1-p)}$~\cite{KM99,DKM96}. 

%\gray{In other words, as these heuristics grow a partial tree $T^\circ$ for~$f$, this function $\mathscr{G}$ determines the specific next leaf $\ell$ of $T^\circ$ that will be ``split" (with a query to the variable that has the highest correlation with the corresponding subfunction $f_\ell$).}   

Even without specifying the impurity function $\mathscr{G}$, it is well known and easy to see that {\sl any} such heuristic can fare poorly even for simple functions~$f$, in the sense of building a decision tree that is much larger than the optimal one.  Consider the most basic setting of binary features, the uniform distribution over inputs, and $f : \{\pm 1\}^n\to \zo$ being the parity of two unknown variables $f(x) = x_j \oplus x_k$ where $j,k\in [n]$.   This function can be computed by a decision tree of size $4$, but since $\E[f_{x_i \ge \theta}] = \E[f_{x_i < \theta}]$ for all $i\in [n]$ and~$\theta$, {\sl any} top-down heuristic---regardless of the impurity function $\mathscr{G}$---may build a complete tree of size $\Omega(2^n)$ before achieving any non-trivial  accuracy.   

%Quoting Kearns, ``This would mean that the assumption that a small decision tree is labeling the data is not especially helpful when examining decision tree learning algorithms, and we must seek alternative assumptions if we wish to account for the empirical success of CART and C4.5"~\cite[p.~1337]{Kea96}.  

\paragraph{Monotonicity and the conjectures of Fiat, Pechyony, and Lee.}  In light of such examples, a question suggests itself: can we identify natural and expressive classes of functions for which strong provable guarantees on the performance of  these top-down heuristics {\sl can} be obtained?  

The first results in this direction were given by Fiat and Pechyony~\cite{FP04,Pec04}, who considered the case of binary features (i.e.~$f : \{\pm 1\}^n \to \zo$) and showed a strong positive result for halfspaces and read-once DNF formulas. For such functions $f$, they showed that these heuristics build a decision tree of {\sl optimal} size that compute~$f$ {\sl exactly}.  Furthermore, and most relevant to our work, they raised the intriguing possibility that {\sl monotonicity} is the key property shared by these functions that enables such a guarantee.\footnote{We consider a function to be monotone if it is either non-decreasing or non-increasing in every coordinate; we do not require the direction to be the same for all coordinates.  (Such functions are sometimes also called ``unate.")}  They conjectured that for all monotone functions~$f$, these heuristics build a decision tree of size ``not far from minimal" that compute $f$ exactly.

Subsequently, Lee~\cite{Lee09} formulated a relaxation of Fiat and Pechyony's conjecture, allowing for an approximate representation rather than an exact representation.  Lee conjectured that for all monotone functions $f : \{\pm 1\}^n \to \zo$ computable by size-$s$ decision trees, these heuristics construct a decision tree of size $\poly(s,1/\eps)$ that achieves error $\eps$ with respect to the uniform distribution over inputs.  (The author further remarked that even a bound that is $\poly(s)$ for constant values of $\eps$ ``would be a huge advance.")

These conjectures of~\cite{FP04} and~\cite{Lee09} are especially appealing because monotonicity, beyond just being a natural assumption that excludes parity and ``parity-like" functions, is an independently important and intensively-studied property in both the theory and practice of machine learning.  Many real-world data sets are naturally monotone in their features. In learning theory, even restricting our attention just to uniform-distribution learning, there is  a large body of work on learning monotone functions~\cite{HM91,KV94,KLV94,Bsh95,BT96,BBL98,Ver98,SM00,Ser04,OS07,Sel08,DLMSWW09,Lee09,JLSW11,OW13,DSFTWW15}.  Partly responsible for this popularity is the fact that monotonicity allows one to sidestep the well-known {\sl statistical query} lower bounds~\cite{BFJKMR94} that hold for many simple concept classes.

%,   We also note that the number of distinct $n$-variable monotone functions is {\sl doubly exponential} in~$n$, making it a substantially larger class than many other canonical ones studied in learning theory. 

\pparagraph{\cite{BLT20}.}  Recent work of the authors  established a weak version of~\cite{Lee09}'s conjecture.  For all monotone functions $f : \{\pm 1\}^n \to \zo$ that are computable by size-$s$ decision trees,~\cite{BLT20} showed that a close variant of these top-down heuristics constructs a decision tree of size $s^{O(\sqrt{\log s}/\eps)}$ that achieves error $\eps$ with respect to the uniform distribution over inputs. 

\cite{BLT20} also showed that the dependence on `$s$' {\sl cannot} be made polynomial, thereby disproving~\cite{Lee09}'s actual conjecture (and~\cite{FP04}'s even stronger conjecture): for all sizes $s \le 2^{\tilde{O}(n^{4/5})}$ and error parameters $\eps \in (0,\frac1{2})$, they exhibited a monotone function $f$ that is computable by a size-$s$ decision tree, and showed that all top-down impurity-based heuristics have to build a decision tree of size $s^{\tilde{\Omega}(\sqrt[4]{\log s})}$ in order to achieve error~$\eps$.

\paragraph{Our contributions.}  We give strengthened provable guarantees on the performance of top-down decision tree learning heuristics.  The three main contributions of our work are: 
\begin{enumerate}
\item We consider the more realistic and challenging {\sl agnostic} setting, where $f$ is an {\sl arbitrary} monotone function.  Prior works focused on the realizable setting, and their results relied on assumptions about the representational complexity of $f$ (i.e.~that $f$ is computable by a small decision tree, or a halfspace, or a read-once DNF formula, etc.).

\item We establish provable guarantees that apply to all top-down heuristics, including ID3, C4.5, and CART.~\cite{BLT20}'s analysis, on the other hand, dealt with a specific {\sl variant} of these heuristics, one whose splitting criterion does not correspond to any impurity function $\mathscr{G}$.   (As a secondary contribution, we further show that~\cite{BLT20}'s guarantees in the realizable setting also hold for all top-down heuristics.) 
\item Our analysis extends to classifiers for {\sl real-valued} features (i.e.~$f : \R^n \to \zo$) and arbitrary product distributions over inputs, whereas prior works dealt with classifiers for binary features (i.e~$f:\{\pm 1\}^n \to \zo$) and mostly focused on the uniform distribution over inputs.  Trees for real-valued features branch on queries of the form $\Ind[x_i \ge \theta]$ for some $\theta \in \R$, whereas trees for binary features branch on queries of the form $\Ind[x_i = 1]$. 
\end{enumerate}  
\newpage 
Our main result is as follows: 

\begin{theorem}[Our main result; informal] 
\label{thm:main-informal} 
Let $f : \R^n \to \zo$ be a monotone function and $\mathcal{D}$ be a product distribution over $\R^n$.  For $s \in \N$, let $\opt_s$ denote the error of the best size-$s$ balanced decision tree for $f$ with respect to $\mathcal{D}$.\footnote{A balanced decision tree of size $s$ is one that has depth $O(\log s)$. This technical assumption is not necessary in the case of binary features and the uniform distribution over inputs (see \Cref{thm:main}); it is only necessary for the general setting of real-valued features and arbitrary product distributions.} Let $\mathscr{G}$ be any impurity function.  For $t =  s^{\tilde{\Theta}((\log s)/\eps^2)}$, the size-$t$ decision tree for $f$ constructed by the top-down heuristic with $\mathscr{G}$ as its splitting criterion achieves classification error $\le \opt_s + \eps$ with respect to $\mathcal{D}$. 
\end{theorem}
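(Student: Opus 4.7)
I would adapt the progress-lemma framework used in the realizable setting to the agnostic setting by comparing the heuristic's partially built tree against a fixed balanced size-$s$ decision tree $T^\star$ of optimal error $\opt_s$. The plan is to maintain an impurity-based potential summed over the leaves of the current tree and to show that, as long as the overall error exceeds $\opt_s + \eps$, every top-down split makes quantifiable progress on this potential. A standard amortization argument then bounds the final tree size in terms of the total progress budget.

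\textbf{Handling arbitrary impurity functions.} First I would reduce to a canonical impurity measure by observing that any reasonable $\mathscr{G}$ (non-negative, concave, vanishing at $0$ and $1$) is sandwiched by the Gini criterion $4p(1-p)$ up to polynomial factors. Consequently, a split that is (near-)optimal with respect to $\mathscr{G}$ achieves Gini-purity-gain comparable to the best Gini-purity-gain, with polynomial slack that is absorbed into the exponent of the final bound. This step lets me analyze the simpler variance-based potential throughout the remainder of the proof, and is also what allows the conclusion to cover all standard heuristics uniformly.

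\textbf{Agnostic progress lemma.} The heart of the argument is the following. Fix a leaf $v$ of the current tree, let $f_v$ be $f$ conditioned on reaching $v$, and let $\opt_v$ denote the error of $T^\star$ on the same restricted region (normalized to the conditional distribution). I would prove that if the variance of $f_v$ exceeds $\opt_v$ by at least $\eta$, then there exists a query $\Ind[x_i \geq \theta]$ whose Gini-purity-gain is $\tilde\Omega(\eta^2 / \log s)$. The key structural fact is that axis-aligned restrictions preserve monotonicity of $f_v$, so any variance not attributable to the $\opt_v$-noise budget must correlate with a coordinate queried by $T^\star$ along some root-to-leaf path. Because $T^\star$ is balanced with depth $O(\log s)$, an averaging argument over its paths produces a variable $i$ and threshold $\theta$ with correlation $\Omega(\eta / \log s)$ to $f_v$, and for monotone $f_v$ such correlation lifts to a Gini-purity-gain of the same order squared.

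\textbf{Size bound and main obstacle.} Given the progress lemma, the size bound follows by amortization: as long as the global error exceeds $\opt_s + \eps$, an $\Omega(\eps)$ mass of leaves must carry excess variance $\Omega(\eps)$, so splitting at the best such leaf decreases the global variance potential by $\tilde\Omega(\eps^2/\log s)$. Since the potential is bounded by a constant and each split adds only one new leaf, summing up over splits yields the claimed size $s^{\tilde O((\log s)/\eps^2)}$. I expect the agnostic progress lemma to be the main obstacle. In the realizable setting the identity $f = T^\star$ pins down the informative coordinates directly; agnostically, $f$ and $T^\star$ need only agree up to an $\opt_s$ fraction, and this disagreement could in principle be concentrated precisely on the coordinates where the optimal splits lie. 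Showing that monotonicity rules this out --- that excess variance beyond the noise budget must be witnessed by a $T^\star$-coordinate --- is the delicate step, and likely relies on a quantitative monotone-correlation inequality (in the spirit of Talagrand's $L^1$--$L^2$ influence bound) combined with averaging over the depth-$O(\log s)$ structure of $T^\star$.
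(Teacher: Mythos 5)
Your overall architecture (a leaf-wise impurity potential, a per-split progress lemma driven by an agnostic correlation bound, and harmonic amortization) matches the paper's, but three of your steps are either unsound as stated or left unproved, and they are exactly where the work lies. The reduction to Gini does not follow from sandwiching the impurity functions: purity gains are \emph{differences} of impurities, and sandwiching $\mathscr{G}$ pointwise by $4p(1-p)$ says nothing about the relative sizes of the two Jensen gaps. Concretely, at a leaf with $\E[f_\ell]=\delta$ small, a split with correlation $\delta$ has Gini gain $\Theta(\delta^2)$ but entropy gain as large as $\Theta(\delta\log(1/\delta))$, so the $\mathscr{G}$-greedy rule may repeatedly select leaves and splits whose Gini gain is far below the best available one, and your variance-based potential then fails to decrease at the claimed rate. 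The paper avoids this entirely by tracking the $\mathscr{G}$-potential $\sum_{\ell} 2^{-|\ell|}\mathscr{G}(\E[f_\ell])$ itself and using only $\kappa$-strong concavity, which yields $\puritygain_f(T^\circ,\ell,x_i)\ge 2^{-|\ell|}\cdot\frac{\kappa}{32}\cdot\Inf_i(f_\ell)^2$ uniformly for every impurity function; no comparison between impurity functions is ever needed. Relatedly, your potential should be built from $\bias(f_\ell)$ rather than variance: the paper's progress lemma is the Jain--Zhang robust OSSS inequality $\max_i \Inf_i(f)\ge(\bias(f)-\dist(f,g))/\log s$, and the variance-based robust OSSS bound (with $\Var(f)-2\dist(f,g)$) would only give error $O(\opt_s)+\eps$ rather than $\opt_s+\eps$.

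Second, the ``agnostic progress lemma'' you flag as the main obstacle is left as a gesture toward a Talagrand-type inequality plus path-averaging; the paper resolves it by importing the Jain--Zhang inequality as a black box and applying it at each leaf to $f_\ell$ versus $g_\ell$ (followed by Jensen's inequality to aggregate over leaves). More importantly, your proposal never engages with real-valued features and general product distributions, which is precisely where the balancedness hypothesis is consumed. The paper reduces to the uniform distribution on $[0,1]^n$ via the CDF transform, rounds the thresholds of the optimal tree to resolution $2^{-w}$ with $w=O(\log(s/\eps))$ at cost $\eps/2$ in error, re-encodes the rounded tree as a Boolean decision tree of size $w^{O(\log s)}=2^{O(\log s\,\log\log(s/\eps))}$ (this is where depth $O(\log s)$ is essential --- an unbalanced tree would blow up super-quasipolynomially), applies the Boolean inequality to the encoding, and then uses monotonicity to argue that the influential encoding bit can be replaced by the single median split $\Ind[x_i\ge\theta]$ with $\Pr[\bx_i\ge\theta]=\frac12$. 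Without some version of this discretization your per-leaf lemma has no meaning over $\R^n$, and without the median-split observation you cannot conclude that the heuristic's threshold search actually finds a comparably good query.
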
 

Previously, such a guarantee was not known to be achievable by any algorithm, even one that is not based on top-down heuristics (i.e.~\Cref{thm:main-informal} represents the first algorithm for properly learning decision trees in an agnostic setting).

We complement~\Cref{thm:main-informal} with a near-matching lower bound.  We show that for all $s \le 2^{\tilde{O}(\sqrt{n})}$, there is a monotone function $f$ such that $\opt_s \le 0.01$, and yet any top-down heuristic has to grow a tree of size $s^{\tilde{\Omega}(\log s)}$ to even achieve error $\le 0.49$ with respect to $f$.  Taken together with~\cite{BLT20}'s $s^{O(\sqrt{\log s})}$ {\sl upper} bound in the realizable setting, this exhibits a separation between the realizable and agnostic settings.

\subsection{Formal statements of our results} 
\label{sec:formal-statements} 

We define a {\sl partial tree} to be a decision tree with unlabeled leaves, and write $T^{\circ}$ to denote such trees.  We refer to any decision tree $T$ obtained from $T^{\circ}$ by a labeling of its leaves as a {\sl completion} of $T^{\circ}$.  Given a tree a partial tree $T^\circ$ and a function $f$, there is a canonical completion of $T^\circ$ that minimizes the approximation error with respect to $f$: 

\begin{definition}[$f$-completion of a partial tree] 
Let $T^\circ$ be a partial tree and $f : \R^n \to \zo$.  Consider the following completion of $T^\circ$:  for every leaf $\ell$ in $T^\circ$, label it $\mathrm{round}(\E[f_\ell])$, where $f_\ell$ denotes the restriction of $f$ by the path leading to $\ell$ and $\mathrm{round}(p) = 1$ if $p\ge \frac1{2}$ and $0$ otherwise.  This completion minimizes the approximation error $\Pr[T(\bx) \ne f(\bx)]$; we refer to it as the \emph{$f$-completion of $T^\circ$} and denote it as $T^\circ_f$. 
\end{definition}

\begin{definition}[Impurity functions and strong concavity]
\label{def:impurity}
An {\sl impurity function} $\mathscr{G} : [0,1] \to [0,1]$ is a concave function that is symmetric around $\frac1{2}$, and satisfies $\mathscr{G}(0) = \mathscr{G}(1) = 0$ and $\mathscr{G}(\frac1{2}) = 1$.  We say that $\mathscr{G}$ is {\sl $\kappa$-strongly concave} if for all $a,b \in [0,1]$, 
\[ \frac{\mathscr{G}(a)+\mathscr{G}(b)}{2}  \le \mathscr{G}\left(\frac{a+b}{2}\right) - \frac{\kappa}{2}\cdot (b-a)^2. \] 
\end{definition} 

\begin{remark}[$\kappa$ values of common impurity functions]
ID3 and C4.5 use binary entropy as their impurity function, which is $\kappa$-strongly concave for $\kappa = 1/\ln(2)$, or $\approx 1.4$. Gini impurity, which is used by CART, is strongly concave for $\kappa = 2$, and \cite{KM99}'s impurity function is strongly concave for $\kappa = 1$.
\end{remark}

We now formally define the top-down heuristics that we study.  
\begin{definition}[$\mathscr{G}$-impurity of a partial tree]
Let $\mathscr{G} : [0,1] \to [0,1]$ be an impurity function and $\mathcal{D}$ be a distribution over $\R^n$.  For $f : \R^n \to \zo$ and a partial tree $T^\circ$, the {\sl $\mathscr{G}$-impurity of $T^\circ$ with respect to $f$} is defined to be 
\[     \Gimpurity_{f,\mathcal{D}}(T^\circ) \coloneqq \sum_{\text{leaves $\ell \in T^\circ$}} \Prx_{\bx \sim \mathcal{D}}[\,\text{$\bx$ reaches $\ell$}\,] \cdot \mathscr{G}(\E[f_\ell]).
\]
\end{definition} 

 If $T^\circ$ is a partial tree and $\ell$ is a leaf of $T^\circ$, we write $T^\circ_{\ell,\Ind[x_i\ge \theta]}$ to denote the extension of $T^\circ$ obtained by splitting $\ell$ with a query to $\Ind[x_i\ge \theta]$.  The following algorithm captures the top-down decision tree learning heuristics that we study in this work:

%Throughout this paper we will focus on a specific top-down decision tree learning heuristic, $\BuildTopDownDT$, described in~\Cref{fig:TopDown}. In~\Cref{sec:other-topdown} we will show that our upper and lower bounds concerning $\BuildTopDownDT$ in fact apply to all top-down ``impurity-based" heuristics, a broad class that includes the popular instantiations ID3, C4.5, and CART that are used in practice.\footnote{\cite{BLT20}'s work on the realizable setting also focused on a specific heuristic, a close variant of $\BuildTopDownDT$.  They showed that their {\sl lower} bounds for the specific heuristic that they study extend to apply to all top-down impurity-based heuristics, but they did not show this for their {\sl upper} bounds.  Our results in~\Cref{sec:other-topdown} can be used to complete the connection, and show that~\cite{BLT20}'s upper bounds do in fact yield the same algorithmic guarantees for all top-down impurity-based heuristics in the realizable setting.} 

\begin{figure}[H]
  \captionsetup{width=.9\linewidth}
\begin{tcolorbox}[colback = white,arc=1mm, boxrule=0.25mm]
\vspace{3pt} 

$\textsc{BuildTopDownDT}_{\mathscr{G},\mathcal{D}}$($f,t$):  \vspace{6pt} 

\ \ Initialize $T^\circ$ to be the empty tree. \vspace{4pt} 

\ \ while ($\size(T^\circ) < t$) \{  \\

\vspace{-15pt}
\begin{itemize}
    \item[] Grow $T^\circ$ by splitting leaf $\ell$ with a query to $\Ind[x_i \ge \theta]$, where $\ell$ and $\Ind[x_i\ge \theta]$ maximize:
    \[
        \Gimpurity_{f,\mathcal{D}}(T^\circ) 
    - \Gimpurity_{f,\mathcal{D}}(T^\circ_{\ell,\Ind[x_i\ge \theta]}),
    \]
    \item[] the {\sl purity gain} with respect to $\mathscr{G}$ and $\mathcal{D}$.  \\
\end{itemize}
\vspace{-18pt}

%\begin{enumerate}
%\item {(Score)} For every leaf $\ell$ in $T^\circ$, let $\Ind[x_{i(\ell)} \ge t_{\ell}]$ denote the query that maximizes the purity gain with respect to $\mathscr{G}$: 
%\[ \text{Gain}_{\mathscr{G}}(\ell,\Ind[x_{i(\ell)} \ge t_\ell]) \coloneqq \mathscr{G}(\E[f_{\ell}]) - \lfrac1{2}\big(\mathscr{G}(\E[(f_{\ell})_{x_{i(\ell)} \ge t_\ell}]) +  \mathscr{G}(\E[(f_{\ell})_{x_{i(\ell)} < t_\ell}])\big).\] 
%Assign $\ell$ the score: 
%\begin{align*}
% \mathrm{score}(\ell) &\coloneqq \Pr[\,\text{$\bx$ reaches $\ell$}\,] \cdot  \text{Gain}_{\mathscr{G}}(\ell,\Ind[x_{i(\ell)} \ge t_\ell]) \\ &= 2^{-|\ell|} \cdot \text{Gain}_{\mathscr{G}}(\ell,\Ind[x_{i(\ell)} \ge t_\ell]),
% \end{align*} 
% where $|\ell|$ denotes the depth of $\ell$ in $T^\circ$. 
% \item {(Split)} Let $\ell^\star$ be the leaf with the highest score.  Grow $T^{\circ}$ by replacing $\ell^\star$ with a query to $x_{i(\ell^\star)}$. 
%\end{enumerate}
\ \ \}  \vspace{4pt} 

\ \ Output the $f$-completion of $T^\circ$.

\vspace{3pt}
\end{tcolorbox}
\caption{Top-down heuristic for building a size-$t$ decision tree approximation of a function~$f : \R^n \to \zo$, using the impurity function $\mathscr{G}$ as its splitting criterion.}
\label{fig:TopDown}
\end{figure}

%In words, $\BuildTopDownDT$ builds a partial tree $T^\circ$ in a top-down fashion, starting from the empty tree.  In each iteration, we split the leaf $\ell^\star$ of $T^\circ$ with the highest score by querying the variable that has the highest correlation with $f_{\ell^\star}$; the score of a leaf $\ell$ is the squared-correlation of the highest-correlated variable of $f_\ell$, normalized by the depth of $\ell$ within $T^\circ$.  We output the $f$-completion of $T^\circ$ when its size reaches $t$. 

We write $\TopDownError_{\mathscr{G},\mathcal{D}}(f,t)$ to denote the error  $\Pr[T(\bx)\ne f(\bx)]$, where $T$ is the size-$t$ tree constructed by $\BuildTopDownDT_{\mathscr{G},\mathcal{D}}(f,t)$.

\begin{definition}[$\opt_s$] 
For a function $f : \R^n \to \zo$, a distribution $\mathcal{D}$ over $\R^n$, and an integer $s \in \N$, we write $\opt_{f,\mathcal{D},s} \in [0,\frac1{2}]$ to denote the error of the best size-$s$ decision tree for $f$: 
\begin{equation*} \opt_{f,\mathcal{D},s} \coloneqq \min\Big\{ \Prx_{\bx\sim\mathcal{D}}[T(\bx) \ne f(\bx)] \colon \text{$T$ is a size-$s$ decision tree}\,\Big\}. 
\end{equation*}
When $f$ and $\mathcal{D}$ are clear from context, we simply write $\opt_s$. 
\end{definition} 

\pparagraph{Our main algorithmic guarantee.} We first state and prove our results in the setting of binary features and the uniform distribution over inputs.  As alluded to in the introduction, in~\Cref{section:real trees} we will show how our results in this specialized setting extend to the more general setting of real-valued features and arbitrary product distributions over inputs.

\begin{theorem}[Our main algorithmic guarantee for binary features and the uniform distribution]
\label{thm:main}  Let $f : \{\pm 1\}^n \to \{ 0,1\}$ be a monotone function, $\mathcal{U}$ be the uniform distribution over $\{ \pm 1\}^n$, and $\mathscr{G} : [0,1]\to [0,1]$ be an $\kappa$-strongly concave impurity function.  For all $s \in \N$ and $\eps \in (0,\frac1{2})$, 
\[ \TopDownError_{\mathscr{G},\mathcal{U}}(f,s^{\Theta(\log s)/\kappa  \eps^2}) \le \opt_s + \eps. \] 
\end{theorem}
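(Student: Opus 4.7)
Work with the potential $\phi(T^\circ) \coloneqq \Gimpurity_{f,\mathcal{U}}(T^\circ)$. A short concavity argument using the normalization $\mathscr{G}(0)=\mathscr{G}(1)=0$ and $\mathscr{G}(\tfrac12)=1$ shows that $\mathscr{G}(p)\ge 2\min(p,1-p)$, hence $\Pr[T^\circ_f(\bx)\ne f(\bx)] \le \tfrac12\phi(T^\circ)$. So it suffices to drive $\phi$ below $2(\opt_s+\eps)$. The plan is to establish a \emph{progress lemma}: whenever $T^\circ$ has $t$ leaves and $\Pr[T^\circ_f(\bx)\ne f(\bx)] > \opt_s+\eps$, the greedy split reduces $\phi$ by at least $\Omega\bigl(\kappa\eps^2/(t\log^2 s)\bigr)$. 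Summing this guarantee for $t=1,\dots,T$ gives a total decrease of order $\kappa\eps^2\log T/\log^2 s$; requiring this to be $\Omega(1)$ (enough to collapse $\phi$ from $\le 1$ below $2(\opt_s+\eps)$) yields $T = s^{\Theta(\log s/(\kappa\eps^2))}$, matching the theorem.

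The progress lemma factors into three substeps. (i) \emph{Averaging.} Fix an optimal size-$s$ tree $T^\star$ attaining $\opt_s$ and, for each leaf $\ell$ of $T^\circ$, set $b(\ell) \coloneqq \min(\mu_\ell,1-\mu_\ell) - \Pr[T^\star|_\ell(\bx)\ne f_\ell(\bx)] \ge 0$, where $T^\star|_\ell$ is the restriction of $T^\star$ to the subcube reaching $\ell$. Then $\sum_\ell p_\ell b(\ell) = \Pr[T^\circ_f\ne f] - \opt_s > \eps$, and Cauchy--Schwarz gives $\sum_\ell p_\ell b(\ell)^2 \ge \eps^2$; pigeonholing over the $t$ leaves produces some $\ell^\star$ with $p_{\ell^\star} b(\ell^\star)^2 \ge \eps^2/t$. (ii) \emph{Extracting an influential coordinate.} At $\ell^\star$, $f_{\ell^\star}$ is monotone (restrictions preserve unateness), and $T^\star|_{\ell^\star}$ is a size-$\le s$ tree that beats the constant baseline on $f_{\ell^\star}$ by $b(\ell^\star)$. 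A monotone, agnostic OSSS-style inequality then yields a coordinate $i^\star$ with $\mathrm{Inf}_{i^\star}(f_{\ell^\star}) \gtrsim b(\ell^\star)/\log s$; the $\log s$ factor arises because under $\mathcal{U}$ any size-$s$ tree has expected path length $\le \log_2 s$ (Kraft), which bounds $\sum_i \delta_i$ in the estimate $\mathrm{Var}(g)-\mathrm{err}(S) \lesssim \sum_i \delta_i(S)\mathrm{Inf}_i(g)$. (iii) \emph{Influence to purity gain.} By monotonicity of $f_{\ell^\star}$, $|\E[f_{\ell^\star}\mid x_{i^\star}=1]-\E[f_{\ell^\star}\mid x_{i^\star}=-1]|=\mathrm{Inf}_{i^\star}(f_{\ell^\star})$, so applying $\kappa$-strong concavity (Definition~\ref{def:impurity}) at these two conditional means (whose midpoint is $\mu_{\ell^\star}$) shows that splitting $\ell^\star$ on $x_{i^\star}$ drops $\phi$ by at least $\tfrac{\kappa}{2}\,p_{\ell^\star}\,\mathrm{Inf}_{i^\star}(f_{\ell^\star})^2 = \Omega(\kappa\eps^2/(t\log^2 s))$, and the greedy rule selects an at-least-as-good split.

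The main obstacle is the monotone agnostic OSSS inequality invoked in substep (ii). Classical OSSS bounds $\mathrm{Var}(g) \le \sum_i \delta_i(\mathcal{A})\mathrm{Inf}_i(g)$ only when the randomized tree $\mathcal{A}$ computes $g$ \emph{exactly}; here one must instead handle an approximating tree whose error can be as large as the trivial baseline $\min(\mu,1-\mu)$ and still extract a bound quantitatively tight in the margin $b(\ell)$. Monotonicity is essential: parity-like examples show that no analogous coordinate-influence lower bound can hold in general, so the argument must exploit the first-order Fourier structure of monotone Boolean functions (where the signed influence equals $\hat{g}(\{i\})$). Converting an approximation gap over the constant baseline into a $\Theta(1/\log s)$-strength single-coordinate influence bound---robustly and without losing factors that would hurt the final exponent---is the technical heart of the theorem.
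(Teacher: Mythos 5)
This is essentially the paper's own proof: the same impurity potential, the same averaging step (your Cauchy--Schwarz-plus-pigeonhole over leaves is the paper's Jensen-plus-pigeonhole), the same conversion from influence to purity gain via $\kappa$-strong concavity together with the influence-equals-correlation fact for monotone functions, and the same harmonic-sum accounting giving $t = s^{O(\log s)/\kappa\eps^2}$. The ``main obstacle'' you isolate in substep (ii) is precisely Jain and Zhang's robust OSSS inequality (\Cref{thm:JZ}, Corollary 1.4 of \cite{JZ11}), which the paper invokes as a black box---note that it holds for \emph{arbitrary} $f$ (two-variable parity is not a counterexample, since its individual influences are large); monotonicity is needed only where you in fact use it in substep (iii), to ensure the correlation-driven greedy split realizes the influence bound, and the only other nit is that $b(\ell)$ need not be nonnegative at every leaf, which is repaired by truncating at $0$ before squaring.
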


%\begin{theorem}[Our main algorithmic guarantee] 
%\label{thm:main} 
%Let $f : \{\pm 1\}^n \to \{ 0,1\}$ be a monotone %function, and $\mathscr{G} : [0,1]\to [0,1]$ be an %$\kappa$-strongly concave impurity function.  For all %$s \in \N$ and $\eps \in (0,\frac1{2})$, 
%\[ \TopDownError_{\mathscr{G}}(f,s^{O(\log s)/(\kappa %\cdot \eps^2)}) \le \opt_s + \eps. \] 
%\end{theorem} 

In words,~\Cref{thm:main} says that for all $s \in \N$, the decision tree of size $s^{\Theta(\log s)}$ constructed by $\BuildTopDownDT$ achieves error that nearly matches that of the best size-$s$ decision tree for~$f$.

% In~\Cref{sec:techniques}, we discuss the challenges that arise in extending~\cite{BLT20}'s approach and techniques to the non-realizable setting, and explain with how we overcome them to prove~\Cref{thm:main}. 

\pparagraph{A near-matching lower bound.}  We contrast~\Cref{thm:main} with~\cite{BLT20}'s result in the {\sl realizable} setting: if $f$ is a monotone function that is computable by a size-$s$ decision tree (i.e.~$\opt_s = 0$), then a variant of the top-down heuristics grows a tree of size $s^{\Theta(\sqrt{\log s})}$ that achieves error $\eps$.\footnote{The heuristic that~\cite{BLT20} analyzes does not correspond to any impurity function $\mathscr{G}$. As mentioned in the introduction, in~\Cref{section: realizable setting} we show that~\cite{BLT20}'s guarantees on their variant of the top-down heuristics in fact hold for all top-down heuristics.}  With this in mind, it is natural to wonder if the parameters of~\Cref{thm:main} can be improved to $\TopDownError_{\mathscr{G},\mathcal{U}}(f, s^{\Theta_{\kappa,\eps}(\sqrt{\log s})}) \le \opt_s + \eps.$
We complement~\Cref{thm:main} with a lower bound that rules out such an improvement. We show that the dependence on `$s$' in~\Cref{thm:main} is in fact near-optimal:   

\begin{theorem}[Our main lower bound] 
\label{thm:lower-bound}
For all $s \le 2^{\tilde{O}(\sqrt{n})}$, there is a monotone function $f : \{ \pm 1\}^n \to \zo$ such that $\opt_s \le 0.01$ with respect to the uniform distribution over inputs, and $\TopDownError_{\mathscr{G},\mathcal{U}}(f,s^{\tilde{\Omega}(\log s)}) \ge 0.49$ for any impurity function~$\mathscr{G}$.  
\end{theorem}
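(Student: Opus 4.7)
The plan is to exhibit an explicit monotone $f : \{\pm 1\}^n \to \zo$ combining a ``signal'' $f^\star$ computable by a size-$s$ decision tree on a hidden set of $O(\log s)$ coordinates, with $\E[f^\star] \approx \tfrac{1}{2}$, and a monotone ``obfuscation'' gadget $g$ on the remaining coordinates, assembled as $f = f^\star \vee g$. The gadget $g$ is designed so that $\Pr[f \ne f^\star] \le 0.01$ (which gives $\opt_s \le 0.01$ via the explicit tree for $f^\star$), while ensuring that for every restriction $\rho$ corresponding to a root-to-leaf path of length at most $\tilde O(\log^2 s)$, the restricted function $f_\rho$ stays balanced with $\E[f_\rho] \in [0.1, 0.9]$, and every coordinate satisfies
\[ \big|\E[f_\rho \mid x_i = 1] - \E[f_\rho \mid x_i = -1]\big| \le s^{-\tilde\Omega(\log s)/2}. \]
A natural candidate for $g$ is a recursive tribes-style composition across $\tilde\Theta(\sqrt n)$ blocks of the non-secret coordinates, tuned so that no short sequence of queries concentrates the bias.

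Given such an $f$, I would bound the per-split purity gain for any impurity $\mathscr{G}$. Because $\mathscr{G}$ is concave on $[0,1]$ with $\mathscr{G}(0) = \mathscr{G}(1) = 0$ and $\mathscr{G}(\tfrac{1}{2})=1$, a standard second-order expansion (using a uniform bound on $|\mathscr{G}''|$ over $[0.1, 0.9]$) gives
\[ \mathscr{G}(\E[f_\ell]) - \tfrac{1}{2}\big(\mathscr{G}(\E[f_{\ell,x_i=1}]) + \mathscr{G}(\E[f_{\ell,x_i=-1}])\big) \le O_\mathscr{G}\big(s^{-\tilde\Omega(\log s)}\big) \]
uniformly in $\ell$ and $x_i$, whenever $\E[f_\ell] \in [0.1, 0.9]$. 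Summing the weighted purity gains over all $t \le s^{\tilde\Omega(\log s)}$ splits (weighted by $\Pr[\ell_s]$, whose total equals the average leaf depth of the grown tree, bounded by $n = \tilde O(\log^2 s)$ in the parameter regime $s \le 2^{\tilde O(\sqrt n)}$), the total $\mathscr{G}$-impurity reduction from the initial value $\mathscr{G}(\E[f]) = 1$ is $o(1)$. Since any valid $\mathscr{G}$ lies above the piecewise-linear tent through $(0,0), (\tfrac{1}{2}, 1), (1, 0)$, i.e.\ $\mathscr{G}(p) \ge 2\min(p, 1-p)$, the $f$-completion's classification error is at most $\tfrac{1}{2}\Gimpurity$, which stays $\ge \tfrac{1}{2} - o(1) \ge 0.49$.

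The main obstacle is the construction of $g$: the requirements of (i) preserving the small-tree approximation of $f^\star$, (ii) keeping $f$ balanced at every moderate-depth restriction, and (iii) suppressing every coordinate's single-coordinate difference to $s^{-\tilde\Omega(\log s)/2}$ are in clear tension, since small $\opt_s$ requires \emph{some} tree to expose significant bias whereas suppressing single-coordinate differences requires hiding all bias behind collective structure that no greedy query can unlock. I would attempt a recursive tribes-style construction, partitioning the non-secret coordinates into $\tilde\Theta(\sqrt n)$ blocks each equipped with a monotone gadget balanced at every moderate restriction, so that no single-coordinate query reveals meaningful information about~$f^\star$. This extends the approach of~\cite{BLT20}'s realizable lower bound, exploiting the $0.01$ agnostic slack to push the exponent from $\tilde\Omega(\sqrt[4]{\log s})$ up to $\tilde\Omega(\log s)$.
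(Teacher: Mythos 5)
Your plan hinges on constructing a monotone $f$ that is $0.01$-close to a balanced size-$s$ tree $f^\star$, has $\E[f_\rho]\in[0.1,0.9]$ at every moderate-depth restriction $\rho$, and yet has \emph{every} single-coordinate difference bounded by $s^{-\tilde{\Omega}(\log s)/2}$. That object cannot exist, and the obstruction is exactly the engine of the paper's upper bound: by the robust OSSS inequality (\Cref{thm:JZ}) applied to $f$ and the size-$s$ tree $f^\star$, already at the root some coordinate satisfies $\Inf_i(f)\ge(\bias(f)-\dist(f,f^\star))/\log s\ge 0.09/\log s$, and for monotone $f$ this influence \emph{is} the single-coordinate difference $|\E[f_{x_i=1}]-\E[f_{x_i=-1}]|$. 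So the required $s^{-\tilde{\Omega}(\log s)/2}$ suppression fails by an exponential margin; the ``tension'' you flag is an impossibility, not an engineering difficulty. The paper's construction therefore works by a different mechanism: it does not make all purity gains tiny, but makes the greedy choice systematically \emph{wrong}. Concretely, $f(x,y)$ equals $1$ on $\Tribes'_\ell(x)$, equals $\Maj_k(y)$ on a residual event $\Rest_\ell(x)$ of probability $0.001$, and $0$ otherwise. The $y$-coordinates (Majority) have influence $\Theta(1/\sqrt{k})\cdot 0.001$, which dominates the $O(\log\ell/\ell)$ influence of every $x$-coordinate (Tribes) under the tuning $c_1/\sqrt{k}\ge\log\ell/\ell$; hence every impurity-based heuristic keeps splitting $y$-variables, and it cannot stop until the restricted Majority is skewed by $\Omega(\sqrt{k})$, which for a random input requires $\Omega(k/\log k)$ queries. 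A tree of size $2^{c_2k/\log k}$ thus leaves almost all inputs at leaves that never touch an $x$-variable, where $f$ restricted is still essentially a fair coin (via $\Tribes'_\ell$), giving error $\ge 0.49$; meanwhile $\Tribes_\ell$ itself is a size-$2^\ell$ tree with error $\le 0.001\cdot\tfrac12$, giving $\opt_s\le 0.01$.

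Two further steps in your write-up would fail even granting the gadget. First, the concluding inference is reversed: from ``error $\le\tfrac12\Gimpurity$'' and ``$\Gimpurity\ge 1-o(1)$'' nothing follows about a lower bound on the error; to lower-bound the error you must lower-bound the leaf biases directly (as the paper does, showing $\E[f_\ell]$ is within $0.001$ of $\tfrac12$ at the untouched leaves), and your weaker guarantee $\E[f_\ell]\in[0.1,0.9]$ would only yield error $\ge 0.1$, not $0.49$. Second, the per-split gain bound versus the number of splits is a race between two unquantified $\tilde{\Omega}(\log s)$ exponents, so even the ``total reduction is $o(1)$'' claim needs explicit constants. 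The salvageable high-level intuition in your proposal --- a small hidden signal plus a monotone distractor, exploiting the agnostic slack --- is indeed the paper's, but the distractor must win by having \emph{larger} per-coordinate influence than the signal rather than by making all influences negligible.
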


Taken together with~\cite{BLT20}'s $s^{O(\sqrt{\log s})}$ upper bound in the realizable setting,~\Cref{thm:lower-bound} exhibits a separation between the realizable and agnostic settings.

\subsection{Related work} 

Kearns and Mansour~\cite{Kea96,KM99} were the first to study top-down decision tree learning heuristics using the framework of learning theory.  They showed that these heuristics can be viewed as {\sl boosting algorithms}, where one views the functions queried at the internal nodes of the tree  (single variables in our case) as weak hypotheses.    As is standard in results on boosting, their results are {\sl conditional} in nature: they assume the existence of weak hypotheses for filtered-and-rebalanced versions of the original distribution (what they call ``The Weak Hypothesis Assmption"), and they show how these top-down heuristics build a decision tree that combines these weak hypotheses into a strong one.  Dietterich, Kearns, and Mansour~\cite{DKM96} gave an experimental comparison of the impurity functions used by ID3, C4.5, and CART, along with a new impurity function that~\cite{KM99} had proposed. 

Fiat and Pechyony~\cite{FP04,Pec04} studied functions~$f$ that are computable by linear threshold functions or read-once DNF formulas, and showed that these heuristics build a decision tree of optimal size that compute $f$ exactly.  Recent work of Brutzkus, Daniely, and Malach~\cite{BDM19} studies functions $f$ that are conjunctions and read-once DNF formulas, and gives theoretical and empirical evidence showing that for such functions, the variant of ID3 proposed by~\cite{KM99}, when run for $t$ iterations, grows a tree that achieves accuracy that matches or nearly matches that of the best size-$t$ tree for $f$.  The same authors~\cite{BDM19-smooth} also show that ID3 learns $(\log n)$-juntas in the setting of smoothed analysis. 

\pparagraph{Learning algorithms not based on top-down heuristics: improper algorithms for learning decision trees.} O'Donnell and Servedio~\cite{OS07} gave a $\poly(n,s^{1/\eps^2})$-time uniform-distribution algorithm for learning monotone functions computable by size-$s$ decision trees.  This remains the fastest algorithm for the realizable setting.~\cite{OS07}'s algorithm is not based on the top-down heuristics that are the focus of our work (and the others discussed above); indeed, their algorithm does not output a decision tree as its hypothesis (i.e.~it is not a proper learning algorithm).  For the agnostic setting, the results of Kalai, Klivans, Mansour, and Servedio~\cite{KKMS08} can be used to give a uniform-distribution algorithm that runs in time $n^{O(\log(s/\eps))}$ and outputs a hypothesis that achieves error $\opt_s + \eps$.  Compared to~\Cref{thm:main}, this algorithm does not require $f$ to be monotone, but like~\cite{OS07}'s algorithm it is also improper.  The work of~\cite{GKK08} gives a uniform-distribution algorithm that runs in $\poly(n,s,1/\eps)$ time; however, their algorithm requires the use of {\sl membership queries}, and is also improper.  Furthermore, all the results discussed in this paragraph only hold in the setting of binary features and with respect to the uniform distribution over inputs.

\subsection{Preliminaries} 

We use {\bf boldface} (e.g.~$\bx \sim \R^n$) to denote random variables.  Given two functions $f,g : \R^n \to \zo$, we write $\dist(f,g) \coloneqq \Pr[f(\bx)\ne g(\bx)]$ to denote the distance between $f$ and $g$.  We write $\bias(f) \coloneqq \min\{ \Pr[f(\bx)=0],\Pr[f(\bx)=1]\}$ to denote the distance of $f$ to the closest constant function; equivalently, $\bias(f) = \Pr[f(\bx) \ne \mathrm{round}(\E[f])]$, where $\mathrm{round}(p) = 1$ if $p \ge \frac1{2}$ and $0$ otherwise. If $\ell$ is a leaf in a decision tree, we write $|\ell|$ to denote the depth of $\ell$ within the tree.

\begin{definition}[Monotone functions]
\label{def:monotone} 
We say that a function $f : \R^n \to \zo$ is {\sl monotone} if for all coordinates $i\in [n]$, either 
\begin{itemize}[leftmargin=0.5cm]
\item[$\circ$] $f$ is non-decreasing in the $i$-th direction: $f(x) \le f(y)$ for all $x,y \in \R^n$ such that $x_i \le y_i$, or 
\item[$\circ$] $f$ is non-increasing in the $i$-th direction: $f(x) \ge f(y)$ for all $x,y \in \R^n$ such that $x_i \le y_i$. 
\end{itemize} 
\end{definition}

\section{Binary features and the uniform distribution: \Cref{thm:main}}
\label{sec:main} 

Recall that~\Cref{thm:main} is concerned with the special case of binary features and the uniform distribution over inputs.  Therefore the decision trees that we reason about in the proof of~\Cref{thm:main} split on queries of the form $\Ind[x_i = 1]$, rather than queries of the form $\Ind[x_i \ge \theta]$ as in the general setting of real-valued features.  Also, all probabilities and expectations in this proof are with respect to the uniform distribution over inputs. 

\begin{definition}[Influence] 
Given a function $f : \{ \pm 1 \}^n \to \zo$, the {\sl influence of coordinate $i\in [n]$ on $f$} is defined to be 
\[ \Inf_i(f) \coloneqq \Pr[f(\bx)\ne f(\bx^{\oplus i})], \]
where  $\bx^{\oplus i}$ denotes $\bx$ with its $i$-coordinate flipped, and $\bx \sim \{ \pm 1\}^n$ is uniform random.  The {\sl total influence} of $f$ is defined to be $\Inf(f) \coloneqq \sum_{i=1}^n \Inf_i(f)$. 
\end{definition} 

 A key technical ingredient in our proof will be an inequality of Jain and Zhang~\cite{JZ11}, a robust version of the powerful O'Donnell--Saks--Schramm--Servedio inequality from the Fourier analysis of boolean functions~\cite{OSSS05}.  The following is a special case of Corollary 1.4 of~\cite{JZ11}: 

\begin{theorem}[\cite{JZ11}'s robust OSSS inequality]
\label{thm:JZ}
Let $f : \{\pm 1\}^n \to \{ 0,1\}$ be any function, and $g : \{\pm 1\}^n \to \{0,1\}$ be a size-$s$ decision tree.  Then 
\[ \max_{i\in [n]}\{\Inf_i(f) \} \ge  \frac{\bias(f)-\dist(f,g)}{\log s}.\]
\end{theorem}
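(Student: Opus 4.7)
The plan is to prove the inequality following the approach of~\cite{JZ11}: induct on the structure of the decision tree $g$, with $\bias(f) - \dist(f,g)$ playing the role of $\Var(f)$ in the standard proof of the original OSSS inequality. It is cleanest to work in the equivalent form $\log s \cdot \max_i \Inf_i(f) \ge \bias(f) - \dist(f,g)$ and to establish the stronger intermediate bound $\sum_j \delta_j(g)\cdot \Inf_j(f) \ge \bias(f) - \dist(f,g)$, where $\delta_j(g)$ is the probability that $g$ queries coordinate $j$ on a uniform random input; the stated theorem then follows from $\sum_j \delta_j(g) \le \log s$ (the expected depth of a size-$s$ tree under the uniform distribution is at most $\log s$, by the entropy bound for prefix-free codes).

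In the base case, $g$ has size $s = 1$ and is therefore a constant function, so $\dist(f,g) \ge \bias(f)$ by the definition of bias; the right-hand side of the claimed inequality is non-positive while the left-hand side is $0$, so the bound holds trivially.

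For the inductive step, suppose $g$ of size $s \ge 2$ queries variable $x_i$ at its root, with sub-trees $g_{-1}, g_{+1}$ of sizes $s_{-1}+s_{+1} = s$. Writing $f_b = f|_{x_i = b}$, I would apply the inductive hypothesis to each pair $(f_b, g_b)$ and combine the two bounds using the identities
\[
\dist(f,g) = \tfrac{1}{2}\bigl(\dist(f_{-1},g_{-1}) + \dist(f_{+1},g_{+1})\bigr),
\qquad
\Inf_i(f) = \dist(f_{-1},f_{+1}),
\]
together with $\Inf_j(f) = \tfrac{1}{2}(\Inf_j(f_{-1}) + \Inf_j(f_{+1}))$ for $j \ne i$ and the recurrences $\delta_i(g) = 1$, $\delta_j(g) = \tfrac{1}{2}(\delta_j(g_{-1}) + \delta_j(g_{+1}))$ for $j \ne i$. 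Expanding the target inequality under these relations produces a diagonal contribution $\tfrac{1}{2}\sum_b \sum_j \delta_j(g_b)\Inf_j(f_b)$, which is exactly what the inductive hypothesis bounds, together with cross terms that have to be controlled non-negatively.

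The main technical obstacle is that $\bias(\cdot)$, unlike $\Var(\cdot)$, is only piecewise-linear in $\E[f]$ and does not decompose cleanly under restriction. The key auxiliary lemma is
\[
\bias(f) \;\le\; \tfrac{1}{2}\bigl(\bias(f_{-1}) + \bias(f_{+1})\bigr) + \tfrac{1}{2}\,\dist(f_{-1}, f_{+1}),
\]
which I would verify by a short case analysis on the signs of $\tfrac{1}{2} - \E[f_{-1}]$ and $\tfrac{1}{2} - \E[f_{+1}]$, using $\E[f] = \tfrac{1}{2}(\E[f_{-1}] + \E[f_{+1}])$ and $\dist(f_{-1},f_{+1}) \ge |\E[f_{-1}] - \E[f_{+1}]|$. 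The extra $\tfrac{1}{2}\dist(f_{-1},f_{+1})$ slack produced by this lemma is then exactly absorbed by the root-query contribution $\delta_i(g)\cdot \Inf_i(f) = \Inf_i(f) = \dist(f_{-1},f_{+1})$, which is the step at which robustness of the bound is preserved across the recursion and which drives the overall $\log s$ factor.
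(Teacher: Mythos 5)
You should first note that the paper does not actually prove \Cref{thm:JZ}: it is imported as a special case of Corollary 1.4 of \cite{JZ11}, so your attempt is measured against the known proofs of OSSS-type inequalities rather than against anything in this paper. Much of your scaffolding is sound. The reduction to the stronger statement $\sum_j \delta_j(g)\cdot \Inf_j(f) \ge \bias(f)-\dist(f,g)$ combined with $\sum_j \delta_j(g) \le \log s$ is the right move; the base case is correct; the identities $\dist(f,g)=\tfrac{1}{2}(\dist(f_{-1},g_{-1})+\dist(f_{+1},g_{+1}))$ and $\Inf_i(f)=\dist(f_{-1},f_{+1})$ are correct; and your auxiliary lemma $\bias(f)\le\tfrac{1}{2}(\bias(f_{-1})+\bias(f_{+1}))+\tfrac{1}{2}\dist(f_{-1},f_{+1})$ is true (in the mixed-sign case the right-hand side is already at least $\tfrac{1}{2}$) and is the correct replacement for the variance decomposition.

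The genuine gap is in the inductive step, at exactly the point you defer ("cross terms that have to be controlled non-negatively"). For $j\ne i$ you have $\delta_j(g)\Inf_j(f)=\tfrac{1}{4}\bigl(\delta_j(g_{-1})+\delta_j(g_{+1})\bigr)\bigl(\Inf_j(f_{-1})+\Inf_j(f_{+1})\bigr)$, so the diagonal contribution carries coefficient $\tfrac{1}{4}$, not the $\tfrac{1}{2}$ you assert. To recover the $\tfrac{1}{2}\sum_b\delta_j(g_b)\Inf_j(f_b)$ demanded by the inductive hypothesis you would need the cross terms to dominate the diagonal ones, which is equivalent to $\bigl(\delta_j(g_{-1})-\delta_j(g_{+1})\bigr)\bigl(\Inf_j(f_{+1})-\Inf_j(f_{-1})\bigr)\ge 0$. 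This fails whenever the branch on which $g$ is more likely to query $x_j$ is the branch on which $f$ is less sensitive to $x_j$ (e.g.\ $\delta_j(g_{-1})=1$, $\delta_j(g_{+1})=0$, $\Inf_j(f_{-1})=1$, $\Inf_j(f_{+1})=0$ gives $\tfrac{1}{4}$ on the left versus $\tfrac{1}{2}$ on the right), and the leftover $\tfrac{1}{2}\Inf_i(f)$ slack from the root query lives in the wrong coordinate to absorb this deficit. This is precisely why neither \cite{OSSS05} nor \cite{JZ11} proves the inequality by a structural induction of this form: the standard route is a coupling argument in which one runs the tree on $\bx$, defines hybrids $\by^{(0)},\dots,\by^{(d)}$ where $\by^{(t)}$ agrees with $\bx$ on the first $t$ queried coordinates and is fresh uniform elsewhere, telescopes the correlation of $f$ on these hybrids against the tree's output (the endpoints yield the quantity $\bias(f)-\dist(f,g)$ after comparing to the best constant), and bounds each increment by $\tfrac{1}{2}\Inf_{i_t}(f)$ using that every hybrid is globally uniform. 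That argument produces $\sum_j\delta_j(g)\Inf_j(f)$ directly, without ever multiplying an averaged $\delta_j$ by an averaged influence, and is what you need in place of your inductive step.
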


\begin{remark}[Context for~\Cref{thm:JZ}] 
\label{rem:JZ} 
The original OSSS inequality essentially corresponds to the special case of~\Cref{thm:JZ} where $f \equiv g$: if $f$ is a size-$s$ decision tree, then 
\[ \max_{i\in [n]} \{ \Inf_i(f) \} \ge \frac{\Var(f)}{\log s}. \] 
The OSSS inequality can be viewed as a variant of the famed Kahn--Kalai--Linial inequality~\cite{KKL88}, one that takes into account the computational complexity of $f$.   In~\cite{OSSS05} the authors also gave a robust version of their inequality that is qualitative similar to~\Cref{thm:JZ} (see the discussion following Theorem 3.2 of~\cite{OSSS05}): under the assumptions of~\Cref{thm:JZ}, 
\[ \max_{i\in [n]}\{\Inf_i(f) \} \ge  \frac{\Var(f)-2\cdot \dist(f,g)}{\log s}.\]
  This inequality can be used in place of~\Cref{thm:JZ} to prove a statement that is qualitatively similar to~\Cref{thm:main}, but with a weaker error bound of $O(\opt_s) + \eps$ rather than the $\opt_s + \eps$ of~\Cref{thm:main}. 
\end{remark}

%We will need a couple of basic facts from the Fourier analysis of boolean functions; specifically, facts concerning the Fourier spectrum of {\sl monotone} boolean functions. (For an in-depth treatment of the subject, see O'Donnell's textbook~\cite{ODbook}.)  Recall that every function $f : \{ \pm 1\}^n \to \R$ has a unique representation as a real polynomial via is Fourier transform: 
%\[ f(x) = \sum_{S\sse [n]} \wh{f}(S) \prod_{i\in S}x_i,\] 
%where the Fourier coefficients $\wh{f}(S) \in \R$ are given by the identity $\wh{f}(S) = \E[f(\bx)\prod_{i\in S}\bx_i]$.  In particular, we have that

%For $f : \{ \pm 1\}^n \to \R$, we write $\wh{f}(i)$ as shorthand for $\E[f(\bx)\bx_i]$. 

%The following is a standard and elementary fact from the analysis of boolean functions (see e.g.~Chapter \S2, Exercise 5a of~\cite{ODbook}): 

\begin{fact}[Influence $\equiv$ correlation for monotone functions]
\label{prop:influence-correlation-monotone}
Let $f : \{\pm 1\}^n \to \{0,1\}$ be a monotone function.  Then $\Inf_i(f) = 2 \cdot |\E[f(\bx)\bx_i]|$ for all $i\in [n]$.  
\end{fact}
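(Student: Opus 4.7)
The plan is to unpack both sides of the claimed identity after conditioning on the coordinates other than $i$, and observe that monotonicity in the $i$-th direction collapses a signed difference into an unsigned one.

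First I would reduce to the non-decreasing case. By \Cref{def:monotone}, for each fixed $i$ the function $f$ is either non-decreasing or non-increasing in the $i$-th direction; the non-increasing case follows from the non-decreasing case applied to $f(x^{\oplus i})$ (which flips the sign of $\E[f(\bx)\bx_i]$ but leaves $\Inf_i(f)$ unchanged), so the absolute value on the right hand side handles both cases uniformly. So assume $f$ is non-decreasing in the $i$-th direction.

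Next I would condition on $\bx_{-i}\in\{\pm 1\}^{n-1}$, the coordinates other than $i$. Writing $f^+(y) := f(y, +1)$ and $f^-(y) := f(y,-1)$ for $y \in \{\pm 1\}^{n-1}$, monotonicity gives $f^+(y) - f^-(y) \in \{0,1\}$ pointwise, with value $1$ precisely when flipping the $i$-th coordinate changes the value of $f$. Thus
\[
\Inf_i(f) \;=\; \Prx_{\by}\!\big[f^+(\by)\ne f^-(\by)\big] \;=\; \Ex_{\by}\!\big[f^+(\by) - f^-(\by)\big],
\]
where the second equality uses that the difference is a $\{0,1\}$-indicator.

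Then I would compute the correlation in the same way: since $\bx_i$ is uniform on $\{\pm 1\}$ independent of $\bx_{-i}$,
\[
\E[f(\bx)\bx_i] \;=\; \tfrac{1}{2}\Ex_{\by}\!\big[f^+(\by) - f^-(\by)\big].
\]
Combining the two displays yields $\Inf_i(f) = 2\,\E[f(\bx)\bx_i]$, and the right hand side is non-negative (again by pointwise monotonicity), so it equals $2\,|\E[f(\bx)\bx_i]|$, as claimed.

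There is no real obstacle here; the only thing to be careful about is the direction of monotonicity, which is why the identity is stated with an absolute value. Everything else is a direct unpacking of definitions via conditioning on $\bx_{-i}$.
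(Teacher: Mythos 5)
Your argument is correct and is the standard one; the paper states this as a fact without proof (it is a routine identity from the analysis of Boolean functions), and your derivation---conditioning on $\bx_{-i}$, noting $f^+ - f^- \in \{0,1\}$ by monotonicity, and handling the non-increasing case via $f(x^{\oplus i})$---is exactly the intended justification. No gaps.
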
 

% \gray{
% \begin{proof}
% Assume without loss of generality that $f$ is non-decreasing in the $i$-th direction.  Writing $x^{i=b}$ to denote $x$ with its $i$-th coordinate set to $b$,  we have that  
% \begin{align*}
% \wh{f}(i) &= \E[f(\bx)\bx_i] \\
% &= \E\Big[ \lfrac1{2} (-f(\bx^{i=-1}) + f(\bx^{i=1}) \Big] \\
% &= \Pr[f(\bx^{i=-1}) = -1 \text{ and } f(\bx^{i=1}) = 1] \tag*{(Monotonicity of $f$)} \\
% &= \Pr[f(\bx^{i=-1}) \ne f(\bx^{i=1})] \tag*{(Monotonicity of $f$)} \\
% &= \Pr[f(\bx) \ne f(\bx^{\oplus i})] \ = \ \Inf_i(f),
% \end{align*}
% and the proof is complete.
% \end{proof} 
% }

\begin{proposition}[Influence and purity gain]
    \label{prop: score reduction}
    For all $\kappa$-strongly concave impurity heuristics $\mathscr{G}$, all monotone functions $f:\{\pm 1\}^n \rightarrow \zo$, and all coordinates $i \in [n]$,
    \begin{align*}
        \Ex_{\bb\sim \{\pm 1\}}[\mathscr{G}(\E[f_{x_i=\bb}])] \leq
        \mathscr{G}(\E[f])  - \frac{\kappa}{32} \cdot \Inf_i(f)^2.
    \end{align*}
\end{proposition}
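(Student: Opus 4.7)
The plan is to reduce the claim to a single application of strong concavity, then use the fact that for monotone functions the influence of a coordinate coincides (up to a factor of $2$) with its degree-$1$ Fourier coefficient, which in turn is exactly half the gap between the two conditional expectations $\E[f_{x_i=1}]$ and $\E[f_{x_i=-1}]$.

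Concretely, I would start by introducing the shorthand $a \coloneqq \E[f_{x_i=-1}]$ and $b \coloneqq \E[f_{x_i=1}]$, so that $\E[f] = (a+b)/2$ by the law of total expectation (with $x_i$ uniform in $\{\pm 1\}$). The left-hand side of the proposition is then just $\tfrac{1}{2}(\mathscr{G}(a)+\mathscr{G}(b))$. Applying the $\kappa$-strong concavity inequality from \Cref{def:impurity} to these two points immediately gives
\[
\tfrac{1}{2}\bigl(\mathscr{G}(a)+\mathscr{G}(b)\bigr) \;\le\; \mathscr{G}\!\left(\tfrac{a+b}{2}\right) - \tfrac{\kappa}{2}(b-a)^2 \;=\; \mathscr{G}(\E[f]) - \tfrac{\kappa}{2}(b-a)^2.
\]

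Next I would convert $(b-a)^2$ into $\Inf_i(f)^2$. A direct computation with $\bx_i$ uniform shows $\E[f(\bx)\bx_i] = \tfrac{1}{2}(b-a)$, so $|b-a| = 2\,|\E[f(\bx)\bx_i]|$. By \Cref{prop:influence-correlation-monotone}, the monotonicity of $f$ gives $\Inf_i(f) = 2|\E[f(\bx)\bx_i]| = |b-a|$. Substituting $(b-a)^2 = \Inf_i(f)^2$ into the previous display yields the stronger inequality $\E_{\bb}[\mathscr{G}(\E[f_{x_i=\bb}])] \le \mathscr{G}(\E[f]) - \tfrac{\kappa}{2}\Inf_i(f)^2$, which in particular implies the claim (since $\tfrac{\kappa}{2} \ge \tfrac{\kappa}{32}$).

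I do not expect a real obstacle here: the only nontrivial input is the equality of influence and correlation for monotone functions, which is already recorded as \Cref{prop:influence-correlation-monotone}. The slack between the $\kappa/2$ that falls out of the proof and the $\kappa/32$ in the statement is presumably there to accommodate the general setting of real-valued features and threshold queries in \Cref{section:real trees}, where the two halves of the split are no longer balanced and an analogous computation incurs extra constants; in the binary uniform setting addressed here, the argument above gives room to spare.
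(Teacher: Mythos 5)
Your proof is correct and is essentially identical to the paper's: both reduce the claim to one application of the $\kappa$-strong concavity inequality at the points $\E[f_{x_i=-1}]$ and $\E[f_{x_i=1}]$, and then invoke \Cref{prop:influence-correlation-monotone} to identify $|b-a|$ with $\Inf_i(f)$. Your observation that the argument actually yields the stronger constant $\kappa/2$ in this binary, uniform-distribution setting is also accurate.
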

\begin{proof}
    Note that $\E[f(\bx)\bx_i] = \lfrac{1}{2}(\E[f_{x_i=1}] - \E[f_{x_i={-1}}])$
    and that $\E[f] = \lfrac{1}{2}(\E[f_{x_i=1}] + \E[f_{x_i={-1}}]).$
    The desired result therefore holds as a direct consequence of the $\kappa$-strong concavity of $\mathscr{G}$ and \Cref{prop:influence-correlation-monotone}.
\end{proof}

%\gray{
%\begin{proof} 
%By Parseval's identity, we have that $\Var(g) %= \E[g(\bx)^2] - \E[g(\bx)]^2 = %1-\wh{g}(\emptyset)^2$ for all functions $g : %\{ \pm 1\}^n \to \{\pm 1\}$, and so  
%\begin{align*} 
%\Ex_{\bb\sim \{\pm 1\}}[\Var(f_{x_i=\bb})] &= %1- \Ex_{\bb\sim \{\pm %1\}}[\wh{f_{x_i=\bb}}(\emptyset)^2] \\
%&= 1-  \Ex_{\bb\sim \{\pm %1\}}[(\wh{f}(\emptyset) + \bb\cdot %\wh{f}(i))^2] \\
%&= 1-(\wh{f}(\emptyset)^2 + \wh{f}(i)^2)   \\
%&= \Var(f) - \wh{f}(i)^2. \qedhere
%\end{align*} 
%\end{proof}
%}

% \gray{
% Combining~\Cref{prop:influence-correlation-monotone,prop:variance-reduction} gives us: 

% \begin{corollary}[Variance reduction under restrictions for monotone functions]
% \label{cor:variance-reduction} 
% For all monotone functions $f : \{ \pm 1\}^n \to \{\pm 1\}$ and all coordinates $i\in [n]$, 
% \[ \Ex_{\bb\in \{\pm 1\}}[\Var(f_{x_i=\bb})] = \Var(f) - \Inf_i(f)^2.\] 
% \end{corollary} 
% }
%In words,~\Cref{cor:variance-reduction} says that [...] 

\subsection{Proof of~\Cref{thm:main}} 

Let $f : \{ \pm 1\}^n \to \zo$ be a monotone function and $g$ be a size-$s$ decision tree for which $\dist(f,g) = \opt_s$. Fix some $\kappa$-strongly concave impurity heuristic $\mathscr{G}$. Given a partial tree $T^\circ$ (which we should think of as the approximator for $f$ that is being built by $\BuildTopDownDT_{\mathscr{G},\mathcal{U}}$), we define the potential function:  
\[\Gimpurity_{f}(T^\circ) \coloneqq \sum_{\text{leaves $\ell \in T^\circ$}} 2^{-|\ell|} \cdot \mathscr{G}(\E[f_\ell]).  \] 
The next lemma records a few useful properties of this potential function $\Gimpurity_{f}$. (This lemma can be viewed as our analogue of~\cite{BLT20}'s Lemma 5.1.~\cite{BLT20} worked with a potential function that is a variant of ours which has $\Inf(f_\ell)$ in place of $\mathscr{G}(\E[f_\ell])$.)

\begin{definition}[Purity gain]
    Let $f: \{\pm 1\}^n \rightarrow \zo$ be a function and $\mathscr{G}$ be an impurity heuristic.  For a partial tree $T^\circ$ and a leaf $\ell$ of $T^\circ$, we write $T^{\circ}_{\ell,x_i}$ to denote the extension of $T^\circ$ obtained by splitting $\ell$ with a query to $x_i$.  The {\sl purity gain associated with splitting $\ell$ with a query to $x_i$} is defined as
\[          \puritygain_{f}(T^\circ,\ell, x_i) \coloneqq \Gimpurity_{f}(T^\circ) - \Gimpurity_{f}(T^\circ_{\ell, x_i}).
\] 
\end{definition}

It is easy to verify that given a leaf $\ell$, the variable associated with the largest purity gain is exactly one that is most correlated with $f_\ell$:

\begin{proposition}[Proposition 7.7 from \cite{BLT20}]
    For any leaf $\ell$ of $T^\circ$, let $x_{i^\star}$ be the variable that maximizes $\puritygain_\mathscr{G}(\ell, x_i)$ among all $i \in [n]$. Then, 
    \begin{align*}
        \E[f_\ell(\bx)\bx_{i^\star}] \geq \E[f_\ell(\bx)\bx_j] \quad \text{for all $j\in [n]$}.
    \end{align*}
\end{proposition}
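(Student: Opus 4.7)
The plan is to show that, as a function of the candidate split variable $x_i$, the purity gain depends only on the magnitude of the correlation $\E[f_\ell(\bx)\bx_i]$, and does so monotonically. First I would unpack $\puritygain_f(T^\circ,\ell,x_i) = \Gimpurity_f(T^\circ) - \Gimpurity_f(T^\circ_{\ell,x_i})$. Under the uniform distribution, leaf $\ell$ contributes mass $2^{-|\ell|}$, while the two children produced by the $x_i$-split each carry mass $2^{-|\ell|-1}$; all other leaves cancel between $T^\circ$ and $T^\circ_{\ell,x_i}$, giving
$$\puritygain_f(T^\circ,\ell,x_i) \;=\; 2^{-|\ell|}\Big(\mathscr{G}(\E[f_\ell]) - \tfrac{1}{2}\mathscr{G}(\E[f_{\ell,x_i=1}]) - \tfrac{1}{2}\mathscr{G}(\E[f_{\ell,x_i=-1}])\Big).$$

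The prefactor $2^{-|\ell|}$ and the term $\mathscr{G}(\E[f_\ell])$ are both independent of $i$, so maximizing purity gain over $i$ is equivalent to minimizing $\mathscr{G}(a_i)+\mathscr{G}(b_i)$, where I write $a_i := \E[f_{\ell,x_i=1}]$ and $b_i := \E[f_{\ell,x_i=-1}]$. Two elementary identities do the work: the midpoint $(a_i+b_i)/2$ equals $\E[f_\ell]$ and hence does not vary with $i$, while $a_i - b_i = 2\E[f_\ell(\bx)\bx_i]$.

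The key geometric ingredient is then that, among pairs $(a,b)$ with a fixed midpoint $m$, the sum $\mathscr{G}(a)+\mathscr{G}(b)$ is non-increasing in $|a-b|$. This is a direct consequence of concavity alone (no differentiability required): if $|a-b|\le |a'-b'|$ with the same midpoint, then $a$ and $b$ each lie in the interval $[a',b']$ (or $[b',a']$), can be written as convex combinations of $a'$ and $b'$, and concavity gives $\mathscr{G}(a)+\mathscr{G}(b) \ge \mathscr{G}(a')+\mathscr{G}(b')$. Applied to our setting, $i^\star$ maximizes $|a_i-b_i| = 2|\E[f_\ell(\bx)\bx_i]|$.

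To match the inequality as stated (without absolute values), I would invoke monotonicity at the end: since $f$ is unate and $f_\ell$ is a restriction, $f_\ell$ is also unate, so by \Cref{prop:influence-correlation-monotone} the sign of $\E[f_\ell(\bx)\bx_i]$ is determined by the direction of monotonicity in coordinate $i$. After the harmless relabeling $x_j \mapsto -x_j$ on coordinates where $f_\ell$ is non-increasing (which affects neither the tree structure nor the uniform distribution), every correlation is non-negative, so maximizing $|\E[f_\ell(\bx)\bx_i]|$ coincides with maximizing $\E[f_\ell(\bx)\bx_i]$ itself. The only step that requires a moment's thought is the concavity lemma, but it is routine; I expect no genuine obstacle.
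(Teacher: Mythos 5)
Your approach is the right one and matches how the paper (via Proposition~\ref{prop: score reduction} and the third item of Lemma~\ref{lem:useful-properties}) treats this claim: reduce the purity gain to $2^{-|\ell|}\bigl(\mathscr{G}(\E[f_\ell])-\tfrac12\mathscr{G}(a_i)-\tfrac12\mathscr{G}(b_i)\bigr)$, observe that the midpoint of $(a_i,b_i)$ is $\E[f_\ell]$ independent of $i$ while the spread is $2\,\E[f_\ell(\bx)\bx_i]$, and use concavity to relate the gain to the spread; your explicit handling of the sign via the unateness relabeling is a point the paper glosses over and is genuinely needed for the inequality as stated without absolute values. One step is slightly too weak as written: concavity alone gives only that $\mathscr{G}(a)+\mathscr{G}(b)$ is \emph{non-increasing} in $|a-b|$ at fixed midpoint, and from a non-strictly monotone dependence you cannot conclude that the maximizer of the purity gain maximizes the spread (two coordinates could have different spreads but equal gains, e.g.\ if $\mathscr{G}$ is linear on the relevant interval, and then $i^\star$ could have strictly smaller correlation than some $j$). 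The fix is immediate: the analysis fixes a $\kappa$-strongly concave $\mathscr{G}$, which is strictly concave, so the map from spread to $\mathscr{G}(a)+\mathscr{G}(b)$ is strictly decreasing and the argmax identification goes through. You should invoke that hypothesis rather than ``concavity alone.''
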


\begin{lemma}[Useful properties of $\Gimpurity_{f}$]  \label{lem:useful-properties} \ 
\begin{enumerate} 
\item $\Gimpurity_{f}(\text{\sl empty tree}) = \mathscr{G}(\E[f]) \le 1$. 
\item $\dist(f,T^\circ_f) \le \Gimpurity_{f}(T^\circ)$.
\item For any leaf $\ell$ of $T^\circ$ and variable $i \in [n]$, 
\[ \puritygain_{f}(T^\circ,\ell, x_i) \geq 2^{-|\ell|} \cdot \frac{\kappa}{32} \cdot \Inf_i(f_\ell)^2. \] 
\end{enumerate} 
\end{lemma}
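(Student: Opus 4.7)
The plan is to verify each of the three claims independently, with parts 1 and 2 being essentially definitional, while part 3 is where the work lives and where Proposition \ref{prop: score reduction} gets applied.

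For the first item, I would observe that the empty tree consists of a single leaf at depth $0$, so the sum defining $\Gimpurity_{f}$ collapses to $2^{0}\cdot\mathscr{G}(\E[f_\ell])=\mathscr{G}(\E[f])$, and the bound by $1$ is immediate from the codomain of $\mathscr{G}$ given in Definition \ref{def:impurity}. For the second item, the key point is to argue leaf-by-leaf: under the uniform distribution, a random input reaches leaf $\ell$ with probability exactly $2^{-|\ell|}$, and conditioned on reaching $\ell$, the $f$-completion $T^\circ_f$ outputs $\mathrm{round}(\E[f_\ell])$, so its conditional error equals $\bias(f_\ell)=\min\{\E[f_\ell],1-\E[f_\ell]\}$. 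Hence
\[
  \dist(f,T^\circ_f)=\sum_{\ell}2^{-|\ell|}\,\bias(f_\ell),
\]
and I would then show the pointwise inequality $\min\{p,1-p\}\le \mathscr{G}(p)$ for every $p\in[0,1]$. This follows from concavity plus the boundary conditions $\mathscr{G}(0)=\mathscr{G}(1)=0$ and $\mathscr{G}(1/2)=1$: for $p\in[0,1/2]$ the chord from $(0,0)$ to $(1/2,1)$ sits below $\mathscr{G}$, giving $\mathscr{G}(p)\ge 2p\ge p$, and the symmetry around $1/2$ handles $p\in[1/2,1]$.

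For the third item, I would unfold the definition of the purity gain. Splitting leaf $\ell$ replaces the single summand $2^{-|\ell|}\mathscr{G}(\E[f_\ell])$ in $\Gimpurity_f(T^\circ)$ by the two summands $2^{-(|\ell|+1)}\mathscr{G}(\E[f_{\ell,x_i=1}])$ and $2^{-(|\ell|+1)}\mathscr{G}(\E[f_{\ell,x_i=-1}])$, while every other leaf's contribution is unchanged. Factoring out $2^{-|\ell|}$ gives
\[
  \puritygain_{f}(T^\circ,\ell,x_i)=2^{-|\ell|}\Big(\mathscr{G}(\E[f_\ell])-\Ex_{\bb\sim\{\pm1\}}[\mathscr{G}(\E[f_{\ell,x_i=\bb}])]\Big).
\]
Now I would apply Proposition \ref{prop: score reduction} to the restricted function $f_\ell$, after noting that restrictions of monotone functions are monotone (fixing variables in either direction preserves unateness in the remaining ones), so the hypothesis of the proposition is met. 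The proposition bounds the bracketed quantity below by $\tfrac{\kappa}{32}\Inf_i(f_\ell)^2$, which delivers exactly the claimed inequality.

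The only subtlety I anticipate is making sure the monotonicity-preservation remark is stated explicitly so that Proposition \ref{prop: score reduction} can be invoked on $f_\ell$ rather than on $f$, and taking a little care that the expectation/probability bookkeeping uses the uniform marginal on the unrestricted coordinates implicit in the definition of $\E[f_\ell]$. Beyond that the argument is essentially a one-line calculation per part.
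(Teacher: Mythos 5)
Your proposal is correct and follows essentially the same route as the paper's proof: part 1 by inspection, part 2 via the leaf-by-leaf decomposition $\dist(f,T^\circ_f)=\sum_\ell 2^{-|\ell|}\bias(f_\ell)$ together with $\bias(f_\ell)\le\mathscr{G}(\E[f_\ell])$, and part 3 by factoring out $2^{-|\ell|}$ and invoking Proposition~\ref{prop: score reduction} on $f_\ell$. The only difference is that you supply details the paper leaves implicit (the chord argument for $\min\{p,1-p\}\le\mathscr{G}(p)$ and the preservation of monotonicity under restriction), which is fine.
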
 

\begin{proof} 
The first claim follows from the definition of $\Gimpurity_{f}$.  For the second claim, we have that 
\begin{align*} 
\dist(f,T^\circ_f) &= \sum_{\text{leaves $\ell \in T^\circ$}} \Pr[\,\text{$\bx$ reaches $\ell$}\,] \cdot \bias(f_\ell)  \\
&= \sum_{\text{leaves $\ell \in T^\circ$}} 2^{-|\ell|} \cdot \bias(f_\ell) \\
&\le \sum_{\text{leaves $\ell \in T^\circ$}} 2^{-|\ell|} \cdot \mathscr{G}(\E[f_\ell]) \tag*{(\Cref{def:impurity})} \\
&=\Gimpurity_{f}(T^\circ). 
\end{align*} 
As for the third claim, we have that 
\begin{align*} 
\puritygain_f(T^\circ,\ell, x_i) &= 2^{-|\ell|} \left(\mathscr{G}(\E[f_\ell]) - \Ex_{\bb\sim \{\pm 1\}}[\mathscr{G}(\E[(f_\ell)_{x_i = \bb}]) \right) \\
&\geq 2^{-|\ell|} \cdot \frac{\kappa}{32} \cdot \Inf_i(f_\ell)^2, 
\end{align*} 
where the final inequality is by~\Cref{prop: score reduction}.
\end{proof}

The following simple fact states that the error of the $f$-completion of a partial tree $T^\circ$ cannot  increase with further splits: 

\begin{fact}[Splits cannot increase error] 
\label{prop:error-goes-down} 
Let $T^\circ$ be a partial tree, and $\wt{T}^\circ$ be the partial tree that results from splitting a leaf of $T^\circ$.  Then $\dist(f,\wt{T}^\circ_f) \le \dist(f,T^{\circ}_f)$. 
\end{fact}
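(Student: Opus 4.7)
The plan is to exploit the defining property of the $f$-completion: among all completions of a partial tree, $T^\circ_f$ is the one that achieves the minimum error with respect to $f$. First I would observe that any labeling of the leaves of $T^\circ$ lifts to a labeling of $\wt{T}^\circ$ computing exactly the same boolean function. Concretely, if $\wt{T}^\circ$ is obtained from $T^\circ$ by splitting a leaf $\ell$ into children $\ell_0$ and $\ell_1$, assign to both $\ell_0$ and $\ell_1$ the same label that $\ell$ receives in $T^\circ_f$, and keep the labels of the other leaves unchanged. Call the resulting completion $T'$; it computes the identical function as $T^\circ_f$, so $\dist(f, T') = \dist(f, T^\circ_f)$. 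Since $\wt{T}^\circ_f$ is defined to minimize error over all completions of $\wt{T}^\circ$, we obtain $\dist(f, \wt{T}^\circ_f) \le \dist(f, T') = \dist(f, T^\circ_f)$, which is the desired inequality.

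This argument requires no calculation and makes no use of the feature type or distribution, so there is no real obstacle. As a sanity check, one can also give a more quantitative proof by writing the error of the $f$-completion of any partial tree $T^\circ$ as a sum of per-leaf contributions $\sum_{\ell} \Pr[\bx \text{ reaches } \ell] \cdot b(\E[f_\ell])$, where $b(p) = \min(p, 1-p)$. Since $b$ is concave on $[0,1]$ (being the minimum of two linear functions) and $\E[f_\ell]$ equals a convex combination of $\E[f_{\ell_0}]$ and $\E[f_{\ell_1}]$ by the law of total probability, Jensen's inequality shows that the contribution of $\ell$ in $T^\circ_f$ is at least the combined contribution of $\ell_0, \ell_1$ in $\wt{T}^\circ_f$; all other leaves contribute identically to both sides. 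This alternative viewpoint has the added benefit of quantifying the improvement as the Jensen deficit at the split leaf, which dovetails with the $\puritygain_f$ bookkeeping used elsewhere in the paper.
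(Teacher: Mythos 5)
Your proof is correct. The paper states this as a ``Fact'' and omits the proof entirely, so there is nothing to compare against; your first argument (any completion of $T^\circ$ induces a completion of $\wt{T}^\circ$ computing the same function, and $\wt{T}^\circ_f$ minimizes over all completions) is surely the intended one-line justification. Your second, quantitative argument is also sound --- writing the error as $\sum_{\ell} \Pr[\bx \text{ reaches } \ell]\cdot \min(\E[f_\ell], 1-\E[f_\ell])$ and applying concavity of $\min(p,1-p)$ at the split leaf --- and it is a nice observation that the resulting Jensen deficit is the error analogue of the purity gain used in the potential-function analysis.
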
 

% \begin{proof} 
% It suffices the prove the proposition in the case where $T^\circ$ is the empty tree, in which case the claim is that 
% \[ \bias(f) \ge \lfrac1{2}(\bias(f_{x_i = -1}) + \bias(f_{x_i = 1})) \]
% for any variable $x_i$.  Suppose without loss of generality that $\Pr[f(\bx)=1] \le \Pr[f(\bx) = 0]$.   Then 
% \begin{align*}
% \bias(f) &= \Pr[f(\bx) = 1] \\
% &= \lfrac1{2}\left(\Pr[f_{x_i=-1}(\bx)= 1] + \Pr[f_{x_i=1}(\bx)= 1]\right)  \\
% &\ge  \lfrac1{2}(\bias(f_{x_i = -1}) + \bias(f_{x_i = 1})),  
% \end{align*}  
% and the proof is complete.  
% \end{proof} 

We are now ready to prove~\Cref{thm:main}.  By~\Cref{prop:error-goes-down}, it suffices to show that error $\opt_s + \eps$ is achieved after at most $s^{O(\log s)/\eps^2}$ iterations/splits.  We do so by lower bounding the score of the leaf that is split by $\BuildTopDownDT$ in each iteration {\sl before} error $\opt_s + \eps$ is achieved.  Let $T^\circ$ be the size-$(j+1)$ partial tree that is built by $\BuildTopDownDT$ after $j$ iterations.  Suppose $\dist(f,T^{\circ}_f) > \opt_s + \eps$, and let $\ell^\star$, $x_i^\star$ be the leaf and variable of $T^\circ$ that is split in the $(j+1)$-st iteration.  We claim that  
\begin{equation}\label{eq:score-lb} \puritygain_f(T^\circ,\ell^\star,x_{i^\star}) >  \frac{\kappa \cdot \eps^2}{32 \cdot (j+1)(\log s)^2}.
\end{equation} 
Writing $x_{i(\ell)}$ to denote variable with the highest correlation with $f_\ell$, we have that 
\begin{align*} 
&\sum_{\text{leaves $\ell \in T^\circ$}} \puritygain_f(T^\circ,\ell, x_{i(\ell)}) \\
&\ge \sum_{\text{leaves $\ell \in T^\circ$}} 2^{-|\ell|} \cdot \frac{\kappa}{32} \cdot \Inf_{i(\ell)}(f_\ell)^2 \tag*{(\Cref{lem:useful-properties})} \\
&\ge \sum_{\text{leaves $\ell \in T^\circ$}} 2^{-|\ell|} \cdot \frac{\kappa}{32} \cdot \left(\frac{\bias(f_\ell) - \dist(f_\ell,g_\ell)}{\mathrm{size}(g_\ell)}\right)^2 \tag*{(\Cref{thm:JZ})} \\
&\ge \frac{\kappa}{32} \cdot \sum_{\text{leaves $\ell \in T^\circ$}} 2^{-|\ell|} \cdot \left(\frac{\bias(f_\ell) - \dist(f_\ell,g_\ell)}{\log s}\right)^2 \tag*{($\size(g_\ell)\le \size(g) =s$)} \\
&\ge \frac{\kappa}{32}\cdot \left[\,\sum_{\text{leaves $\ell \in T^\circ$}} 2^{-|\ell|} \cdot \left(\frac{\bias(f_\ell) - \dist(f_\ell,g_\ell)}{\log s}\right)\right]^2 \tag*{(Jensen's inequality)} \\
&= \frac{\kappa}{32} \cdot \frac1{(\log s)^2} \Bigg(\,\sum_{\text{leaves $\ell \in T^\circ$}} 2^{-|\ell|}\cdot \bias(f_\ell) \\&\hspace{30mm}- \sum_{\text{leaves $\ell \in T^\circ$}} 2^{-|\ell|}\cdot \dist(f_\ell,g_\ell)\Bigg)^2 \\
&= \frac{\kappa}{32}\cdot \frac1{(\log s)^2} \cdot \left(\dist(f,T^\circ_f) - \dist(f,g) \right)^2 \\
&> \frac{\kappa}{32}\cdot \frac1{(\log s)^2} \cdot \left((\opt_s + \eps) - \opt_s\right)^2 \ = \ \frac{\kappa}{32} \cdot \left(\frac{\eps}{\log s}\right)^2.  
\end{align*} 
It follows that there must be at least one leaf and variable with purity gain greater than $\kappa \eps^2/32(j+1)(\log s)^2$.  Since $\BuildTopDownDT$ splits the leaf and variable with the largest purity gain, this establishes~\Cref{eq:score-lb}.  

Writing $\wt{T}^\circ$ to denote the partial tree that is obtained after $\BuildTopDownDT$ makes the single split with largest purity gain,  we have that
\begin{align*}
 \Gimpurity_{f}(\wt{T}^\circ) &= \Gimpurity_{f}(T^\circ) - \puritygain_f(T^\circ,\ell^\star,x_{i^\star}) \\
 &\le \Gimpurity_{f}(T^\circ) - \frac{\kappa \cdot \eps^2}{32 \cdot(j+1)(\log s)^2}. 
 \end{align*} 
Combining this with the first and second claims of~\Cref{lem:useful-properties}, we have the following: the value of the potential function starts off at at most $1$ with $T^\circ$ being the empty tree, decreases by at least $\kappa \cdot \eps^2/32j(\log s)^2$ with the $j$-th split, and error $\opt_s + \eps$ is achieved once this value drops below $\opt_s + \eps$.  Therefore, we can bound the number of splits necessary to ensure error $\opt_s + \eps$ by the smallest $t$ that satisfies: 
\[ \sum_{j=1}^t \frac{\kappa\eps^2}{32j(\log s)^2} \ge 1-(\opt_s + \eps). \] 
Since 
\[ \sum_{j=1}^t \frac{\kappa\eps^2}{32j(\log s)^2} \ge \frac{\kappa\eps^2 \log t}{32(\log s)^2}, \] 
we conclude that $t \le s^{O(\log s)/\kappa\eps^2}$ suffices.  This completes the proof of~\Cref{thm:main}. 

\section{Real-valued features and product distributions:~\Cref{thm:main-informal}}
\label{section:real trees}
We will prove an extension of \Cref{thm:main} that applies to functions of real-valued features and arbitrary product distributions over inputs.

In order to state our result, we need to slightly restrict our definition of $\opt$ so it only considers ``balanced" trees.

\begin{definition}[Balanced tree]
    A decision tree $T$ of depth $d$ and size $s$ is {\sl balanced} if $ d = O(\log s)$.
\end{definition}

\begin{definition}[$\balancedopt_s$] 
For a function $f : \mathbb{R}^n \to \{0,1\}$, product distribution $\mathcal{D}$ over $\mathbb{R}^n$, and an integer $s \in \N$, we write $\balancedopt_{f,\mathcal{D},s} \in [0,\frac1{2}]$ to denote the error of the best balanced size-$s$ decision tree for $f$: 
\[     \balancedopt_{f,\mathcal{D},s}\coloneqq \min\big\{ \Pr_{\bx \sim \mathcal{D}}[\,T(\bx) \ne f(\bx)\,] \colon \text{$T$ is a balanced size-$s$ decision tree}\,\big\},
\]  When $f$ and $\mathcal{D}$ are clear from context, we simply write $\balancedopt_s$. 
\end{definition} 

We can now formally state our main theorem.
\begin{reptheorem}{thm:main-informal}
    Let $f:\mathbb{R}^n \rightarrow \{0,1\}$ be a monotone function and $\mathcal{D}$ be a product distribution over~$\mathbb{R}^n$. For any $\kappa$-strongly concave impurity heuristic $\mathscr{G}$ and $s \in \mathbb{N}$, 
\[          \TopDownError_{\mathscr{G},\mathcal{D}}(f, s^{\tilde{O}((\log s) / \kappa \varepsilon^2)}) \leq \balancedopt_s + \varepsilon.
    \]
\end{reptheorem}
The proof of \Cref{thm:main-informal} will follow the same overall structure as our proof of \Cref{thm:main}. One key new ingredient is a generalization of \Cref{thm:JZ} to real-valued features; this extension could be of independent interest: 

\begin{theorem}[Extension of \cite{JZ11} to trees for real-valued features]
\label{thm:JZ real input}
Let $f : \mathbb{R}^n \to \zo$ be a monotone function, $\mathcal{D}$ be an arbitrary product distribution over $\mathbb{R}^n$, and $T$ be a size-$s$ balanced decision tree. Then, there exist $i^\star \in [n]$ and $\theta^\star \in \mathbb{R}$ for which $\Pr_{\bx \sim D}[\bx_{i^\star} \geq \theta^\star] = \lfrac1{2}$ and
\begin{align*}
    \Ex_{\bx \sim \mathcal{D}}\big[f(\bx)\cdot \Ind[\bx_{i^\star} \geq \theta^\star]\big] \geq \Omega\left(\frac{\varepsilon}{\log(s) \log\log(s/\eps)}\right)
\end{align*}
where $\varepsilon \coloneqq \bias(f) - \dist(f,T)$, and where $\bias(f)$ and $\dist(f,T)$ are also measured with respect to~$\mathcal{D}$. 
\end{theorem}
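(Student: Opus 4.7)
The plan is to reduce the real-valued JZ to the binary-feature JZ (\Cref{thm:JZ}) via a quantile discretization, then use monotonicity to upgrade any influential bit to a \emph{median} halfspace. First, normalize: after applying the per-coordinate CDF of $\mathcal{D}$ we may assume $\mathcal{D}$ is uniform on $[0,1]^n$, and after flipping signs on coordinates where $f$ is non-increasing we may assume $f$ is non-decreasing in each coordinate. Fix $k = \lceil \log_2(Cs/\varepsilon)\rceil$ for a suitable constant $C$, partition each coordinate into $2^k$ equiprobable bins of length $2^{-k}$, and encode the bin index of $\bx_i$ by an MSB-first binary string $\boldsymbol{\sigma}_i \in \{0,1\}^k$; the concatenation $\boldsymbol{\sigma} \in \{0,1\}^{nk}$ is uniform, and crucially $\boldsymbol{\sigma}_{i,1} = \Ind[\bx_i \geq 1/2]$ is the median halfspace indicator. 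Round each threshold of $T$ to the nearest bin boundary to obtain $\tilde T$ of the same size $s$; a hybrid argument combined with the balanced assumption (expected path length $O(\log s)$) gives $\dist(T, \tilde T) = O(\log s \cdot 2^{-k}) \leq \varepsilon/4$. Each threshold query in $\tilde T$ is now a prefix comparison on $k$ bits, expanding into a decision subtree of size $O(k)$, so $\tilde T$ becomes a binary tree $T' : \{0,1\}^{nk} \to \{0,1\}$ of size $|T'| = O(sk)$ with $T'(\boldsymbol{\sigma}) = \tilde T(\bx)$ whenever $\text{bin}(\bx) = \boldsymbol{\sigma}$.

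Next, define $\bar f(\boldsymbol{\sigma}) \coloneqq \E[f(\bx) \mid \text{bin}(\bx) = \boldsymbol{\sigma}]$, a $[0,1]$-valued function that is monotone in each bit of $\boldsymbol{\sigma}$ since $f$ is coordinatewise non-decreasing. The tower property gives $\E[\bar f] = \E[f]$, hence $\bias(\bar f) = \bias(f)$; Jensen's inequality gives $\dist(\bar f, T') \leq \dist(f, \tilde T) \leq \dist(f, T) + \varepsilon/4$. The coupling proof of \Cref{thm:JZ} extends to $[0,1]$-valued targets (with $\Inf_j(\bar f) \coloneqq \E|\bar f(\boldsymbol{\sigma}) - \bar f(\boldsymbol{\sigma}^{\oplus j})|$ and $\dist(\bar f, T') \coloneqq \E|\bar f - T'|$), yielding some bit $(i^\star, j^\star) \in [n] \times [k]$ with
\[
\Inf_{i^\star, j^\star}(\bar f) \geq \frac{\bias(\bar f) - \dist(\bar f, T')}{\log |T'|} = \Omega\!\left(\frac{\varepsilon}{\log s + \log\log(s/\varepsilon)}\right),
\]
which is at least the target rate $\Omega(\varepsilon/(\log s \cdot \log\log(s/\varepsilon)))$.

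To convert this into a median halfspace correlation, I would prove by induction on $k$ the following \emph{top-bit lemma}: for any $g : \{0,1\}^k \to \R$ of the form $g(\sigma) = G\big(\sum_{j=1}^k \sigma_j 2^{k-j}\big)$ with $G$ non-decreasing, $\Inf_1(g) \geq \Inf_j(g)$ for every $j \in [k]$. The inductive step splits on $\sigma_1$ to get $\Inf_j(g) \leq \Inf_2(g)$ for $j \geq 3$ via the inductive hypothesis applied to the two restrictions, and a short calculation shows $\Inf_1(g) - \Inf_2(g) = \E_\tau[g(1,0,\tau) - g(0,1,\tau)] \geq 0$ because the bit pattern $(1,0,\ldots)$ encodes a larger integer than $(0,1,\ldots)$ and $G$ is non-decreasing. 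Applying this lemma to $\bar f$ viewed as a function of $\boldsymbol{\sigma}_{i^\star}$ alone (averaging over the other coordinates' bits) gives $\Inf_{i^\star, 1}(\bar f) \geq \Inf_{i^\star, j^\star}(\bar f)$. Setting $\theta^\star = 1/2$, monotonicity yields $\Inf_{i^\star, 1}(\bar f) = \E[f \mid \bx_{i^\star} \geq \theta^\star] - \E[f \mid \bx_{i^\star} < \theta^\star]$, and a short computation gives $\E[f \cdot \Ind[\bx_{i^\star} \geq \theta^\star]] = \E[f]/2 + \Inf_{i^\star,1}(\bar f)/4$, which establishes the desired bound upon dropping the non-negative $\E[f]/2$ term.

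The main obstacle is the top-bit lemma, which is the technically novel ingredient: the induction is sensitive to the MSB-first binary encoding of the quantile bins and is what allows us to extract specifically a \emph{median} halfspace rather than a correlation at some arbitrary quantile within a block. The $[0,1]$-valued extension of \Cref{thm:JZ} used in the second step is standard but must be verified carefully; if that extension is problematic, a randomized-rounding variant that applies the Boolean JZ to $\hat f(\boldsymbol{\sigma}) \sim \mathrm{Bernoulli}(\bar f(\boldsymbol{\sigma}))$ and takes expectations recovers the same conclusion up to constants.
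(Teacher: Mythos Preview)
Your overall plan is the same as the paper's: normalize to the uniform distribution on $[0,1]^n$ via the CDF, discretize each coordinate into $2^k$ equiprobable bins with $k=O(\log(s/\eps))$, round the thresholds of $T$ to bin boundaries, replace each rounded threshold by a bit-comparison subtree to obtain a Boolean decision tree, apply the Boolean JZ inequality (\Cref{thm:JZ}), and finally use monotonicity to argue that the median bit dominates all other bits of the same coordinate. Your top-bit lemma is a clean and rigorous formulation of what the paper states more informally in its last paragraph, and your handling of the $[0,1]$-valued $\bar f$ is more careful than the paper's.

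There is, however, one genuine error that you should fix: the claim that the expanded Boolean tree $T'$ has size $O(sk)$ is false. The comparison ``$\mathrm{int}(\boldsymbol{\sigma}_i)\ge \theta$'' on $k$ bits is indeed computed by a vine with $k+1$ leaves, but when you substitute such a vine for an internal node of $\tilde T$, the vine has up to $\Theta(k)$ ``go left'' leaves and up to $\Theta(k)$ ``go right'' leaves (one terminating branch per bit of $\theta$, alternating sides according to the bits of $\theta$), and \emph{each} of these must carry its own copy of the corresponding child subtree. This is a multiplicative $\Theta(k)$ blowup at every level, so for a balanced tree of depth $d=O(\log s)$ the expanded tree has size $k^{O(\log s)}=2^{O(\log s\cdot\log\log(s/\eps))}$, not $O(sk)$. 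This is exactly the size bound the paper obtains, and it is why the $\log\log(s/\eps)$ factor appears \emph{multiplicatively} in the denominator of the theorem. With your claimed size you would get the stronger rate $\Omega(\eps/(\log s+\log\log(s/\eps)))$, which you then needlessly weaken; with the correct size you get precisely the stated rate $\Omega(\eps/(\log s\cdot\log\log(s/\eps)))$, and no weakening is needed. (A minor related slip: your rounding bound $\dist(T,\tilde T)=O(\log s\cdot 2^{-k})$ is missing a factor of $s$ from the union bound over leaves, but your choice of $k$ already absorbs it.)
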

%We prove \Cref{thm:JZ real input} by relating $T$, a decision tree for real-valued features, to a specially constructed decision tree for binary features. After doing so, we apply \Cref{thm:JZ} to that tree with binary features, and relate that back to $f$, a function with real-valued features. The full proof of \Cref{thm:JZ real input} is deferred to the appendix.

To prove \Cref{thm:main-informal}, we apply \Cref{thm:JZ real input} in the same way \Cref{thm:JZ} is used to prove \Cref{thm:main}. 

\subsection{Proof of \Cref{thm:main-informal}}

We first prove~\Cref{thm:JZ real input}:
\begin{proof}[Proof of~\Cref{thm:JZ real input}]
First, without loss of generality, we can assume that $\mathcal{D}$ is the uniform distribution on $[0,1]^n$. Otherwise, we can transform each variable by its CDF. This also means that we will set $\theta^\star = \frac1{2}$.

We will use  \Cref{thm:JZ} as a black box in our proof of \Cref{thm:JZ real input}. In order to do this, we will relate $T$, a decision tree with real-valued inputs, to a specially constructed decision tree with boolean-valued inputs. Each input of $T$ will be encoded into a boolean vector of width $w = O(\log(s/\varepsilon))$ using the encoder $E:[0,1] \rightarrow \{\pm 1\}^w$ defined as follows.
\begin{align*}
    E(x) = \textsc{Binary}(\lfloor x \cdot 2^w \rfloor),
\end{align*}
where $\textsc{Binary}$ is the function that encodes an integer in binary form. We want to ensure that if two inputs are encoded as the same Boolean vector evaluate to the same value. This requires rounding the thresholds in $T$.
\begin{lemma}
\label{lem:round}
    Let $T:[0,1]^n \rightarrow \{0,1\}$ be a balanced size-$s$ tree and $\varepsilon > 0 $. Then for $w= O(\log(s /\varepsilon))$
    let $T_\text{round}$ be the tree formed by rounding all thresholds of $T$ to the nearest $2^{-w}$.  Then
        $\dist(T,T_\text{round}) \leq \eps/2$.
\end{lemma}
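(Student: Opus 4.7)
Since the proof of \Cref{thm:JZ real input} opens by reducing to the case where $\mathcal{D}$ is the uniform distribution on $[0,1]^n$ (via the CDF transformation), I will assume the distance $\dist(T, T_\text{round})$ is measured with respect to this distribution, so that each coordinate $\bx_i$ is uniform on $[0,1]$ and the coordinates are independent. The whole proof is a one-line union bound over the internal nodes of $T$.

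For each internal node $v$ of $T$, let $i_v \in [n]$ and $\theta_v \in [0,1]$ be the variable and threshold queried at $v$, let $\theta_v' \in 2^{-w}\mathbb{Z}$ be its rounded version, and let
\[
  I_v \coloneqq \big[\min(\theta_v,\theta_v'),\,\max(\theta_v,\theta_v')\big)
\]
be the ``critical interval'' at $v$, which has length $|I_v| \le 2^{-w}$. Define
\[
  B_v \coloneqq \{\bx \text{ reaches } v \text{ in } T\} \cap \{\bx_{i_v} \in I_v\}.
\]
The first step is the containment $\{T(\bx) \ne T_\text{round}(\bx)\} \subseteq \bigcup_{v} B_v$, where the union is over internal nodes $v$ of $T$. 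If $T(\bx) \ne T_\text{round}(\bx)$, the root-to-leaf paths followed by $\bx$ in $T$ and in $T_\text{round}$ must first diverge at some node $v$; up to $v$ the two trees are identical, so $\bx$ reaches $v$ in both, and the two trees branch differently at $v$ exactly when $\Ind[\bx_{i_v} \ge \theta_v] \ne \Ind[\bx_{i_v} \ge \theta_v']$, which is equivalent to $\bx_{i_v} \in I_v$.

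The second step bounds each $\Pr[B_v]$ by the trivial inequality $\Pr[B_v] \le \Pr[\bx_{i_v} \in I_v] \le 2^{-w}$, and unions over the at most $s$ internal nodes of $T$:
\[
  \dist(T, T_\text{round}) \;\le\; \sum_{v \text{ internal}} \Pr[B_v] \;\le\; s \cdot 2^{-w}.
\]
Choosing $w = \lceil \log_2(2s/\varepsilon) \rceil = O(\log(s/\varepsilon))$ makes the right-hand side at most $\varepsilon/2$, as required.

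There is essentially no obstacle here; the argument is a one-line union bound and the only mild subtlety is to set up the ``first point of divergence'' coupling cleanly. Note that the balanced-tree hypothesis is not actually used in this lemma --- it enters the proof of \Cref{thm:JZ real input} elsewhere. A slightly sharper bound using balance (via $\sum_v \Pr[\bx \text{ reaches } v] \le \mathrm{depth}(T) = O(\log s)$) would let one take $w = O(\log((\log s)/\varepsilon))$, but this strengthening is not needed for what is claimed.
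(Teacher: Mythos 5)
Your proof is correct, but it routes the union bound differently from the paper. The paper's proof fixes a leaf $\ell$ of $T$, views the set of inputs reaching $\ell$ as a subcube $\{a_i \le x_i < b_i\}$, uses the balanced hypothesis to argue that only $O(\log s)$ coordinates are actually restricted at $\ell$, union-bounds over those coordinates to get a per-leaf discrepancy of $O(\log(s)\cdot 2^{-w})$, and then union-bounds over the $s$ leaves, for a total of $O(s\log s\cdot 2^{-w})$. You instead union-bound over internal nodes via a first-point-of-divergence argument: if $T(\bx)\ne T_{\text{round}}(\bx)$ then $\bx$ reaches some node $v$ in both trees and lands in the length-$\le 2^{-w}$ critical interval $I_v$, giving $s\cdot 2^{-w}$ outright. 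Both yield $w=O(\log(s/\eps))$, but your bound is slightly tighter and, as you correctly note, makes no use of balance (the paper's proof doesn't truly need it here either --- without balance its per-leaf bound degrades to $O(s\cdot 2^{-w})$ and the total to $O(s^2\cdot 2^{-w})$, still giving $w=O(\log(s/\eps))$; balance is genuinely needed later in the proof of the real-valued OSSS extension, to control the size $w^d$ of the Boolean-encoded tree $S$). One caveat on your closing aside: the proposed sharpening to $w=O(\log((\log s)/\eps))$ is not immediate, because $\Pr[B_v]$ need not be at most $\Pr[\bx \text{ reaches } v]\cdot 2^{-w}$ --- conditioned on reaching $v$, the coordinate $\bx_{i_v}$ is uniform on a possibly short subinterval, so the conditional probability of hitting $I_v$ can greatly exceed $2^{-w}$. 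Since you explicitly flag that remark as unnecessary, it does not affect the validity of the proof.
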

\begin{proof}[Proof of~\Cref{lem:round}]
    Since $T$ and $T_\text{round}$ have the same label at every leaf, if $T(x) \neq T_\text{round}(x)$, it means that $x$ reaches different leaves in $T$ and $T_\text{round}$. Fix some leaf $\ell$ of  $T$. Then, $\ell$ is reached by some subcube of inputs of the form:
    \begin{align*}
        a_i \leq x_i < b_i \quad \text{for $i\in [n]$.}
    \end{align*}
    The corresponding leaf, $\ell_\text{round}$, of $T_\text{round}$ is reached by all inputs satisfying the following, where $\mathsf{round}(\cdot)$ rounds an input to the nearest multiple of $2^{-w}$:
    \begin{align*}
        \mathsf{round}(a_i) \leq x_i < \mathsf{round}(b_i) \quad \text{for $i \in [n]$.}
    \end{align*}
    We will upper bound the probability that a randomly chosen $x \in [0,1]^n$ reaches $\ell$ in $T$ but does not reach $\ell_\text{round}$ in $T_\text{round}$. For that to occur, there must be one $i \in [n]$ such that $x_i \in [a_i, \mathsf{round}(a_i)]$ or $x_i \in [\mathsf{round}(b_i), b_i]$. Since $|\mathsf{round}(z) - z| \leq 2^{-w}$, for any fixed $i$, $x_i$ falls in one of those ranges with probability at most $2 \cdot 2^{-w}$.
    
    Furthermore, since $T$ is balanced, $\ell$ has depth at most $O(\log(s))$. This means that for all but $O(\log(s))$ choices for $i$, $a_i = 0$ and $b_i = 1$, in which case there is no chance that $x_i \in [a_i, \mathsf{round}(a_i)]$ or $x_i \in [\mathsf{round}(b_i), b_i]$. By union bounding the up to $O(\log(s))$ input coordinates that matter for $\ell$, we have that the probability that $x$ reaches $\ell$ in $T$ but does not reach $\ell_\text{round}$ in $T_\text{round}$ is at most $O(\log(s) \cdot 2^{-w})$.
    
    By union bound over the $s$ leaves, the probability $x$ reaches a different leaf in $T$ as in $T_\text{round}$ is at most $O(s \log s \cdot 2^{-w})$. Setting this equal to $\eps/2$ and solving for $w$ yields the desired result.
\end{proof}

We are now ready to complete the proof of \Cref{thm:JZ real input}. Note that
\begin{align*}
    \bias(f) - \dist(f, T_\text{round}) 
    &\geq \bias(f) - (\dist(f,T) + \dist(T, T_\text{round})) \\
    & \geq \bias(f) - \dist(f,T) - \lfrac{\varepsilon}{2} \\
    & \geq \lfrac{\varepsilon}{2}.
\end{align*}

We will create a tree with Boolean inputs, $S: \{\pm 1\}^{wn} \rightarrow \{0,1\}$, satisfying the following relation
\begin{align*}
    S(E(x_1),...,E(x_n)) = T_\text{round}(x) \quad \text{for all $x \in [0,1]^n$}.
\end{align*}
The decision tree computing $S$ is the tree computing $T$ where each threshold is replaced with the tree of size at most $w$ specifying that threshold. Replacing each node of $T$ with a tree of size $w$ creates a tree that is a factor of $w^d$ larger than $T$, where $d$ is the depth of $T$. Using $w = O(\log (s/\varepsilon))$, we then have that $S$ has size 
\begin{align*}
    O(\log(s/\varepsilon))^d = 2^{O(\log(s) \log\log(s/\varepsilon))}.
\end{align*}
We are now ready to apply \Cref{thm:JZ}.  We have determined that $\bias(f) - \dist(f, T_\text{round}) \geq \eps/2$, and that $T_{\text{round}}$ is encodable as a decision tree of size $2^{\log(s) \log(\log(s/\varepsilon))}$ with Boolean inputs. By \Cref{thm:JZ}, there is a bit of the encoding with influence, with respect to $f$, at least $\Omega(\eps/\log(s)\log\log(s/\eps))$.

All that remains is to show that the most influential bit of the encoding, is a single threshold of the form $\Ind[x_i \geq \frac1{2}]$. All bits of the encoding for a variable, $x_i$, restrict $x_i$ to exactly half of the space in $[0,1]$. Since $f$ is monotone, the most influential such restriction is $\Ind[x_i \geq \frac1{2}]$. Hence, the split $\Ind[x_i \geq \frac1{2}]$ has influence at least $\Omega\left(\varepsilon/\log(s) \log\log(s/\eps)\right)$.  Since $f$ is monotone, this is also the correlation of $\Ind[x_i\ge \frac1{2}]$ with $f$, proving \Cref{thm:JZ real input}.
\end{proof}

\begin{proof}[Proof of~\Cref{thm:main-informal}]
The remainder of the proof of \Cref{thm:main-informal} is the same as the proof of \Cref{thm:main}.  Let $T^\circ$ be the size-$(j+1)$ partial tree built \BuildTopDownDT after $j$ iterations. As long as $\dist(f, T^\circ_f) > \balancedopt_s + \varepsilon$, we know there is a split that results in purity gain of at least\footnote{From \Cref{thm:JZ real input}, we actually know there is a split with this much purity gain that is the median of one input coordinate, although $\BuildTopDownDT$ may choose one that is not the median if it results in more purity gain.}
\begin{align*}
    \Omega\left(\frac{\kappa \cdot \varepsilon^2}{j \cdot (\log(s) \log\log(s/\varepsilon))^2}\right).
\end{align*}
Therefore, after we have run for
\[    t = 2^{O(\log(s) \log\log(s/\varepsilon))^2 /\kappa  \varepsilon^2}  = s^{\tilde{O}((\log s) /\kappa \varepsilon^2)}
\]
iterations we must have that $\dist(f, T^\circ_f) < \balancedopt_s + \varepsilon$, proving \Cref{thm:main-informal}.
\end{proof}
\section{Near-matching lower bound:~\Cref{thm:lower-bound}}

Our function $f$ witnessing the lower bound of~\Cref{thm:lower-bound} will be a variant of the well-known $\textsc{Tribes}_{\ell} : \zo^\ell \to \zo$ function from the analysis of boolean functions.  Recall that $\Tribes_{\ell}$ is a read-once DNF formula with $m\coloneqq \lfloor \frac{\ell}{w}\rfloor$ terms of width exactly $w$ over disjoint variables (with some variables possibly left unused), 
\[ \Tribes_\ell(x) \coloneqq T_1(x) \vee \cdots \vee T_m(x),  \] 
and $w \coloneqq \log\ell - \log \ln\ell + o_\ell(1)$ is chosen so that $\Pr[\Tribes_\ell(\bx) = 1]$ is as close to $\frac1{2}$ as possible.\footnote{While this acceptance probability cannot be made {\sl exactly} $\frac1{2}$ for all values of $\ell$ due to granularity issues, it will be the case that $\Pr[\Tribes_\ell(\bx)] = \frac1{2} \pm O(\frac{\log \ell}{\ell})$.  For clarity we will assume for the remainder of this proof the acceptance probability of $\Tribes_\ell$ is exactly $\frac1{2}$, noting that the same calculations go through if one carries around the additive $o_\ell(1)$ factor.}  (For more on the $\Tribes$ function, see Chapter \S4.2 of~\cite{ODbook}.)  Our construction will also involve the Majority function, $\Maj_k(y) \coloneqq \Ind[\sum_{i=1}^k y_i \ge 0]$.  

\pparagraph{The function that witnesses the separation.}  Let $m' < m$ be chosen so that the function $\Tribes_{\ell}' : \zo^{\ell} \to \zo$, 
\[ \Tribes_\ell'(x) \coloneqq T_1(x) \vee \cdots \vee T_{m'}(x), \] 
has acceptance probability $\Pr[\Tribes_\ell'(\bx) = 1]$ as close to $0.499$ as possible.\footnote{The same remark as in the previous footnote applies here.}  We additionally define $\Rest_{\ell} : \zo^\ell \to \zo$ to be:  
\[ \Rest_\ell(x) \coloneqq \Tribes_\ell(x) \wedge \neg\,\Tribes_\ell'(x),\] 
nothing that $\Pr[\Rest_\ell(\bx) = 1] = 0.001$. 

Our function $f_{\ell,k} : \zo^\ell \times \zo^k \to \zo$ is defined as follows: 
\[ f_{\ell,k}(x,y) \coloneqq
\begin{cases}
1 & \text{if $\Tribes_{\ell}'(x)=1$} \\
\Maj_k(y) & \text{if $\Rest_{\ell}(x) =1$}  \\
0 & \text{otherwise}, 
\end{cases} 
 \] 
 where $k$ and $\ell$ are chosen so that~\Cref{eq:k-and-ell} in the statement of~\Cref{lem:lower-bound-error} below is satisfied with equality.   
 
 \begin{proposition}[Upper bound on $\opt$]
 \label{prop:opt-upper-bound}
$\opt_{f_{\ell,k},2^{\ell}} \le 0.01$. 
 \end{proposition}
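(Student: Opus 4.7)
The plan is to exhibit a single size-$2^\ell$ decision tree $T$ whose error with respect to $f_{\ell,k}$ is well below $0.01$. The key observation is that $f_{\ell,k}$ depends on $y$ only on the event $\Rest_\ell(x) = 1$, which has probability a mere $0.001$ under the uniform distribution. Accordingly, I would take $T$ to depend on $x$ alone and to output $\Tribes_\ell'(x)$; one concrete realization is the complete binary tree of depth $\ell$ that queries every coordinate of $x$ and labels the leaf indexed by $x^\star \in \zo^\ell$ by $\Tribes_\ell'(x^\star)$. This tree has exactly $2^\ell$ leaves and hence qualifies as a size-$2^\ell$ decision tree.

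The error analysis then proceeds by a straightforward case split following the three-case definition of $f_{\ell,k}$. On the event $\Tribes_\ell'(x) = 1$, $f_{\ell,k}(x,y) = 1 = T(x,y)$; on the event $\Tribes_\ell(x) = 0$ we have both $\Tribes_\ell'(x)=0$ and $\Rest_\ell(x)=0$, so $f_{\ell,k}(x,y) = 0 = T(x,y)$; these two cases contribute zero error. The only source of disagreement is the event $\Rest_\ell(x) = 1$, on which $f_{\ell,k}(x,y) = \Maj_k(y)$ while $T(x,y) = 0$, and the conditional disagreement probability is exactly $\Pr[\Maj_k(\by) = 1] = \tfrac{1}{2}$ by the balancedness of $\Maj_k$. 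Summing,
\[
    \opt_{f_{\ell,k},\, 2^\ell} \;\le\; \dist(f_{\ell,k}, T) \;=\; \Pr[\Rest_\ell(\bx) = 1] \cdot \tfrac{1}{2} \;=\; 0.001 \cdot \tfrac{1}{2} \;=\; 0.0005 \;\le\; 0.01,
\]
which yields the stated bound.

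There is essentially no technical obstacle here: the proof reduces to the observation that $y$ can be ignored at the cost of a factor-$\tfrac{1}{2}$ error on the low-probability event $\Rest_\ell(\bx) = 1$. The only minor subtlety is that the nominal acceptance probabilities $\Pr[\Tribes_\ell(\bx) = 1] = \tfrac{1}{2}$ and $\Pr[\Tribes_\ell'(\bx) = 1] = 0.499$ hold only up to the additive $o_\ell(1)$ factors flagged in the footnotes, but the generous slack between $0.0005$ and $0.01$ absorbs these lower-order terms for all sufficiently large $\ell$.
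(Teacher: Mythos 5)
Your proof is correct and follows essentially the same route as the paper: approximate $f_{\ell,k}$ by a size-$2^\ell$ tree depending on $x$ alone, so that the only error comes from the event $\Rest_\ell(\bx)=1$ (probability $0.001$), on which $\Maj_k(\by)$ disagrees with the constant output with probability $\tfrac{1}{2}$. The only cosmetic difference is that the paper uses $\Tribes_\ell(x)$ as the approximator (erring when $\Maj_k(\by)=0$) whereas you use $\Tribes_\ell'(x)$ (erring when $\Maj_k(\by)=1$); both give the same $0.0005$ bound.
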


 \begin{proof} 
 We have that 
 \begin{align*}
  \Pr[f_{\ell,k}(\bx,\by)\ne \Tribes_\ell(\bx)] &= \Pr[\Rest_\ell(\bx) = 1 \text{ and } \Maj_k(\by) \ne \Tribes_\ell(\bx)] \\
  &= \Pr[\Rest_\ell(\bx) = 1] \cdot \Pr[\Maj_k(\by) = 0] \\
  &= 0.001 \cdot \lfrac1{2} \ < \ 0.01.
  \end{align*}  
 Since $\Tribes_\ell$ is computed by a decision tree of size $\le 2^\ell$, it follows that that $\opt_{f_{\ell,k},2^{\ell}} \le 0.01$.  
 \end{proof} 
% \begin{proposition}
% $\Inf_{x_i}(f) = O(\frac{\log \ell}{\ell})$ and $\Inf_{y_i}(f) = \Omega(\frac1{\sqrt{k}})$. 
% \end{proposition} 

\begin{lemma}[Lower bound on $\TopDownError$]
\label{lem:lower-bound-error} 
There are universal constants $c_1$ and $c_2$ such that the following holds.  Suppose $k$ and $\ell$ satisfy: 
\begin{equation}
\frac{c_1}{\sqrt{k}} \ge \frac{\log \ell}{\ell}. 
\label{eq:k-and-ell} 
\end{equation} 
Then $\TopDownError(f_{\ell,k},2^{\,c_2k/\log k}) \ge 0.49$. 
\end{lemma}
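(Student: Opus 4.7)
The plan is to show that for appropriately small universal constants $c_1,c_2$, the top-down heuristic is ``fooled'' into splitting primarily on the $y$-variables, which cannot yield error below the baseline that any $y$-only tree achieves. The baseline is established by a direct computation: for every fixed $y$, $\Ex_{x}[f_{\ell,k}(x,y)] = \Pr[\Tribes_\ell'(x)=1] + \Pr[\Rest_\ell(x)=1]\cdot\Maj_k(y) \in [0.499,\,0.500]$, so the optimal $y$-only predictor is the constant $0$, with error exactly $0.4995$. Thus every function of $y$ alone has error $\ge 0.4995$ on $f_{\ell,k}$.

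The next step is to bound the correlation of each $x$-variable with $f_{\ell,k}$ restricted to any leaf $\rho$ of the heuristic's partial tree. Each $x$-variable has influence $\Theta(\log\ell/\ell)$ on $\Tribes_\ell$ and on $\Tribes_\ell'$; because $x$ and $y$ are independent, any $y$-restriction leaves these correlations unchanged, and the contribution from the small-measure $\Rest$ term is dominated. Using the hypothesis $c_1/\sqrt{k}=\log\ell/\ell$ with equality, this yields $|\Ex[f_{\ell,k}\cdot x_i \mid \rho]| = O(c_1/\sqrt{k})$ for every $x_i$ and every leaf $\rho$. Combined with smoothness of $\mathscr{G}$ near $\tfrac12$, each $x$-split at a leaf $\ell$ reduces the depth-weighted $\Gimpurity_f$-potential by at most $2^{-|\ell|}\cdot O(c_1^2/k)$. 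Summing over the at most $t$ splits of the heuristic's tree and using the Kraft-type identity $\sum_{\text{internal }v} 2^{-\mathrm{depth}(v)} \le \log t$, the total contribution from $x$-splits is $O(c_1^2 \log t / k)$.

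Finally, because $y$-splits alone cannot beat the $0.4995$ baseline, any improvement below this value must be attributable to $x$-splits. For $t \le 2^{c_2 k/\log k}$, the cumulative $x$-contribution is $O(c_1^2 c_2 / \log k)$, which for sufficiently small universal $c_1,c_2$ and $k$ large enough is less than $0.009$; hence the heuristic's error is at least $0.4995 - 0.009 > 0.49$. The main obstacle is this last step: rigorously decomposing the heuristic's error reduction into a ``$y$-baseline'' contribution and an ``$x$-correction''. Because the heuristic may interleave $x$- and $y$-splits arbitrarily, and the $y$-baseline bound applies only to purely $y$-dependent trees, I expect the clean formalization proceeds via a hybrid potential (such as a depth-weighted $\sum_\ell 2^{-|\ell|}\bias(f_\ell)$ or a modified $\Gimpurity_f$) that simultaneously upper-bounds the $f$-completion error and decreases per split by the bounds above; the non-smooth ``kink'' of $\bias$ at $\tfrac12$ may require a case analysis according to whether the split straddles the threshold.
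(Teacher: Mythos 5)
Your high-level intuition---that every $x$-variable has correlation only $O(\log \ell/\ell) = O(c_1/\sqrt{k})$ with $f$, so the heuristic cannot profitably exploit the $x$-block and is stuck near the $y$-only baseline of $0.4995$---is the right one, and your baseline computation is correct. But the potential-accounting scheme you propose has gaps that I do not think can be patched in the form given. First, the only potential that \emph{equals} the error is $\sum_{\ell} 2^{-|\ell|}\bias(f_\ell) = \dist(f,T^\circ_f)$; the $\mathscr{G}$-impurity potential merely upper-bounds the error, so showing that $x$-splits barely decrease the impurity does not show that the error stays above $0.49$. With the bias potential, the decrease caused by a split is \emph{linear}, not quadratic, in the correlation (splitting a leaf with $\E[f_\ell]=\tfrac12$ on a variable of correlation $c$ lowers the average bias by $\Theta(c)$), so your sum over $x$-splits becomes $O(c_1/\sqrt{k})\cdot \log t = O(c_1 c_2\sqrt{k}/\log k)$, which diverges; the quadratic bound would additionally require smoothness of $\mathscr{G}$, which \Cref{thm:lower-bound} does not assume. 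Second, the claimed correlation bound $O(c_1/\sqrt{k})$ holds only at leaves whose paths restrict no $x$-variables: restricting $w-1$ of the $w$ variables in a term of $\Tribes_\ell'$ to $1$ boosts the influence of the remaining one to $\Omega(1)$. Third, the step ``any improvement below the baseline is attributable to $x$-splits'' is exactly where the interleaving bites: a $y$-split made below an $x$-restriction that forces $\Rest_\ell = 1$ reduces the error substantially, so the total contribution of $y$-splits is not capped by the pure-$y$ baseline. You flag this last point as the main obstacle but do not resolve it.

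The paper's proof avoids potential accounting entirely and argues structurally about where random inputs land. By the greedy splitting rule, the tree can query an $x$-variable only after the preceding $y$-queries have driven every $\Inf_{y_j}(\Maj_k(y)_{\overline{\pi}})$ below $\tfrac{1}{100\sqrt{k}}$ (\Cref{claim:small-influence}), which forces the prefix to have discrepancy $\Omega(\sqrt{k})$ among the queried $y$-bits; by Hoeffding, a random input produces such a prefix within $c_3 k/\log k$ queries with probability at most $0.001$ (\Cref{cor:deep}). Since the tree has only $2^{c_3k/(2\log k)}$ leaves, at most another $0.001$ fraction of inputs reach depth $c_3k/\log k$ at all. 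Hence a $0.998$ fraction of inputs reach leaves along paths that never query an $x$-variable, where the restriction of $f$ is still $0.499$-versus-$0.5$ balanced and any leaf label errs with probability at least $0.49$. If you pursued your route you would still need this ``no $x$-query before depth $\Omega(k/\log k)$ for most inputs'' structural fact to control the leaves where your correlation bound and your $y$-baseline bound both fail, at which point the potential bookkeeping becomes unnecessary.
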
 

\begin{proof}[Proof of~\Cref{thm:lower-bound} assuming~\Cref{lem:lower-bound-error}]
We choose $k$ and $\ell$ such that~\Cref{eq:k-and-ell} is satisfied with equality, and define $s \coloneqq 2^\ell$.  \Cref{thm:lower-bound} follows as an immediate consequence of~\Cref{prop:opt-upper-bound} and~\Cref{lem:lower-bound-error} since $2^{\Omega(k/\log k)} = 2^{\Omega(\ell^2/(\log \ell)^3)} = s^{\tilde{\Omega}(\log s)}$. 
\end{proof}

\subsection{Proof of~\Cref{lem:lower-bound-error}} 

Our proof of~\Cref{lem:lower-bound-error} draws on many of the ideas in~\cite{BLT20}'s proof of their Theorem 6(b).  Let $T$ denote the tree constructed by $\BuildTopDownDT(f_{\ell,k},2^{\,c_2k/\log k})$, where $c_2$ is a universal constant that will be determined later.  
 
 \begin{claim}
 \label{claim:small-influence} 
Let $\overline{\pi}$ be a path in $T$ that leads to a first query to an $x$-variable.  Then 
$\Inf_{y_j}(\Maj_k(y)_{\overline{\pi}}) \le \frac1{100\sqrt{k}}$ for all $j\in [k]$.  
 \end{claim}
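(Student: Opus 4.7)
The plan is to combine the greedy splitting rule with the structural decomposition of $f_{\ell,k}$ to upper bound $\Inf_{y_j}(\Maj_k(y)_{\overline{\pi}})$. If $y_j$ is set along $\overline{\pi}$ the claim is vacuous, so assume $y_j$ is free. First I would verify that $f_{\ell,k}$ is monotone in every coordinate, and hence so is $f_{\overline{\pi}}$. The only subtle point is that $\Rest_\ell$ is not itself monotone in $x$, but any increase in $x$ that destroys $\Rest_\ell=1$ simultaneously creates $\Tribes'_\ell=1$, so $f$ still does not decrease; monotonicity in the $y$-coordinates is immediate since $\Maj_k$ is monotone.

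Let $x_{i^\star}$ be the $x$-variable with which $\BuildTopDownDT$ splits $\overline{\pi}$. By the greedy rule, the purity gain of the split $(\overline{\pi},x_{i^\star})$ is at least the purity gain of any split $(\overline{\pi},y_j)$ (the common factor of $2^{-|\overline{\pi}|}$ cancels in this comparison). Proposition~7.7 of~\cite{BLT20}, applied at leaf $\overline{\pi}$, then gives $\E[f_{\overline{\pi}}(\bx,\by')\bx_{i^\star}] \geq \E[f_{\overline{\pi}}(\bx,\by')\by_j]$ for every $j \in [k]$, which by~\Cref{prop:influence-correlation-monotone} and the monotonicity of $f_{\overline{\pi}}$ rewrites as $\Inf_{x_{i^\star}}(f_{\overline{\pi}}) \geq \Inf_{y_j}(f_{\overline{\pi}})$.

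Next I bound each side separately. Since $\overline{\pi}$ only queries $y$-variables, $f_{\overline{\pi}}(x,y') = \Tribes_\ell'(x) \vee (\Rest_\ell(x) \wedge \Maj_k(y')_{\overline{\pi}})$. A free $y_j$ only affects $f_{\overline{\pi}}$ when $\Rest_\ell(x)=1$, so by independence of $\bx$ and $\by$, $\Inf_{y_j}(f_{\overline{\pi}}) = \Pr[\Rest_\ell(\bx)=1]\cdot \Inf_{y_j}(\Maj_k(y)_{\overline{\pi}}) = 0.001 \cdot \Inf_{y_j}(\Maj_k(y)_{\overline{\pi}})$. For the $x$-side, note that for any fixed $y'$, $f_{\overline{\pi}}(\cdot,y')$ equals $\Tribes_\ell'(\cdot)$ when $\Maj_k(y')_{\overline{\pi}}=0$ and equals $\Tribes_\ell' \vee \Rest_\ell = \Tribes_\ell(\cdot)$ when $\Maj_k(y')_{\overline{\pi}}=1$; averaging over $y'$ gives $\Inf_{x_i}(f_{\overline{\pi}}) \leq \max\{\Inf_{x_i}(\Tribes_\ell),\Inf_{x_i}(\Tribes_\ell')\} \leq \Inf_{x_i}(\Tribes_\ell)$. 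A standard $\Tribes$ calculation gives $\Inf_{x_i}(\Tribes_\ell) = 2^{-(w-1)}(1-2^{-w})^{m-1}$, which with the design condition $(1-2^{-w})^m = \tfrac{1}{2}$ yields $\Inf_{x_i}(\Tribes_\ell) = \Theta(2^{-w}) = \Theta(\log\ell/\ell)$.

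Chaining the bounds, $0.001 \cdot \Inf_{y_j}(\Maj_k(y)_{\overline{\pi}}) = \Inf_{y_j}(f_{\overline{\pi}}) \leq \Inf_{x_{i^\star}}(f_{\overline{\pi}}) \leq C\log\ell/\ell$ for some absolute constant $C$. Applying the hypothesis $\log\ell/\ell \leq c_1/\sqrt{k}$ of~\Cref{lem:lower-bound-error} and taking $c_1$ sufficiently small (e.g., $c_1 \leq 1/(10^5 C)$) gives $\Inf_{y_j}(\Maj_k(y)_{\overline{\pi}}) \leq 1/(100\sqrt{k})$, as required. I do not anticipate any significant obstacle: the only non-trivial computation is the standard $\Theta(2^{-w})$ estimate for $\Tribes$ influences (cf.~Chapter~\S4.2 of~\cite{ODbook}), and the rest is straightforward bookkeeping that exploits the independence of the $x$- and $y$-blocks.
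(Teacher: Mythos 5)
Your proof is correct and takes essentially the same approach as the paper's: use the greedy splitting rule together with influence-equals-correlation for monotone functions to get $\Inf_{x_{i^\star}}(f_{\overline{\pi}})\ge\Inf_{y_j}(f_{\overline{\pi}})$, bound the left side by $O(\log\ell/\ell)$ via the standard $\Tribes$ influence estimate, factor the right side as $0.001\cdot\Inf_{y_j}(\Maj_k(y)_{\overline{\pi}})$ by independence of the $x$- and $y$-blocks, and close by taking $c_1$ sufficiently small. The only (harmless) slip is the inequality $\max\{\Inf_{x_i}(\Tribes_\ell),\Inf_{x_i}(\Tribes_\ell')\}\le\Inf_{x_i}(\Tribes_\ell)$: for a variable in one of the first $m'$ terms one has $\Inf_{x_i}(\Tribes_\ell')=2^{-(w-1)}(1-2^{-w})^{m'-1}>\Inf_{x_i}(\Tribes_\ell)$, but both quantities are at most $2^{-(w-1)}=O(\log\ell/\ell)$, so the conclusion is unaffected (the paper simply bounds $\Inf_{x_i}(f_{\overline{\pi}})$ by the sum of the two influences).
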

 
 \begin{proof}
 Suppose without loss of generality that $\overline{\pi}$ leads to a query to $x_1$.   By the splitting criterion of $\BuildTopDownDT$ and~\Cref{prop:influence-correlation-monotone}, we have that 
 \[ \Inf_{x_1}(f_{\overline{\pi}}) \ge \Inf_{y_j}(f_{\overline{\pi}}) \quad \text{for all $j\in [k]$}. \] 
 Since
 \[ \Inf_{x_1}(f_{\overline{\pi}}) \le \Inf_{x_1}(\Tribes_{\ell}) + \Inf_{x_1}(\Tribes_\ell') \le O\left(\frac{\log \ell}{\ell}\right), \]
and 
 \[ \Inf_{y_j}(f_{\overline{\pi}}) = \Pr[\Rest_\ell(\bx)=1] \cdot \Inf_{y_j}(\Maj_k(y)_{\overline{\pi}}) = 0.001 \cdot \Inf_{y_j}(\Maj_k(y)_{\overline{\pi}}), \]
 it follows that $\Inf_{y_j}(\Maj_k(y)_\pi) \le O(\frac{\log \ell}{\ell})$. The claim follows by choosing $c_1$ to be a sufficiently small constant in~\Cref{eq:k-and-ell}.
 \end{proof} 
 
 \begin{corollary}
 \label{cor:deep}
There is a universal constant $c_3$ such that the following holds.  Let $(\bx,\by)\sim \zo^k \times \zo^\ell$ be a uniform random input, and $\pi_{(\bx,\by)}$ be the corresponding root-to-leaf path in $T$ that $(\bx,\by)$ follows.  The probability that $\pi_{(\bx,\by)}$ queries an $x$-variable before at least $c_3k/\log k$ many $y$-variables is at most $0.001$. 
 \end{corollary}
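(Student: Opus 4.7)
The plan is to combine \Cref{claim:small-influence} with a random-walk tail bound on $\by$. At any node where $\pi_{(\bx,\by)}$ is about to query its first $x$-variable, the claim forces every remaining $y$-influence on the restricted Majority to be at most $1/(100\sqrt{k})$. I will argue that this can only happen when the values of $\by$ observed so far are heavily biased, and then show that such a bias is rare for a random $\by$.

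First, I would pin down the influences of a restricted $\Maj_k$. Suppose the path $\overline{\pi}$ has queried $t$ of the $y$-variables with values summing to $s$ (interpreting $y_j \in \{\pm 1\}$). Then $\Maj_k(y)_{\overline{\pi}}$ is a threshold function on the remaining $k-t$ coordinates, and a Stirling estimate yields
\[ \Inf_{y_j}\!\big(\Maj_k(y)_{\overline{\pi}}\big) \;=\; \Theta\!\left(\tfrac{1}{\sqrt{k-t}}\right)\exp\!\left(-\tfrac{s^2}{2(k-t)}\right). \]
Combining this with \Cref{claim:small-influence}, and using that $t \leq c_3 k/\log k \ll k$ for a small universal $c_3$, I would read off $|s| \geq c\sqrt{k}$ for an absolute constant $c>0$ (any $c < \sqrt{2\ln 100}$ works, up to absorbing the Stirling prefactor).

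Next, I would invoke a random-walk argument. Along $\pi_{(\bx,\by)}$, the $y$-queries hit distinct coordinates of $\by$, so the values read off form an i.i.d.\ sequence of uniform $\pm 1$ bits. Their partial sums $W_1, W_2, \ldots$ are a simple symmetric random walk, and the event in the corollary is contained in the event that $|W_t| \geq c\sqrt{k}$ for some $t \leq c_3 k/\log k$. The reflection principle followed by Hoeffding's inequality then gives
\[ \Pr\!\left[\max_{t \leq c_3 k/\log k} |W_t| \geq c\sqrt{k}\right] \;\leq\; 4\exp\!\left(-\tfrac{c^2 \log k}{2c_3}\right) \;=\; 4\, k^{-c^2/(2c_3)}. \]
Choosing $c_3$ to be a sufficiently small universal constant makes this at most $0.001$ for all $k$ above an absolute threshold (below which the claim is vacuous for the targeted tree-size regime).

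The main obstacle will be the quantitative biased-majority influence estimate: one needs Stirling tight enough to deduce $|s| = \Omega(\sqrt{k})$ from the $1/(100\sqrt{k})$ bound without losing a multiplicative factor that depends on $k$. Once this step is pinned down, the rest reduces to textbook random-walk concentration, and the parameters line up to deliver the required $0.001$ probability bound with a universal $c_3$.
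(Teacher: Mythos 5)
Your proposal is correct and follows essentially the same route as the paper's proof: both use \Cref{claim:small-influence} to force the prefix of $y$-queries before the first $x$-query to have discrepancy $\Omega(\sqrt{k})$ (the paper asserts this directly from the biased-majority influence estimate you spell out via Stirling), and both then bound the probability of such a discrepancy within $c_3k/\log k$ steps by a Hoeffding-type tail bound. The only cosmetic difference is that you control the event via a maximal inequality for the random walk, whereas the paper union-bounds over the possible prefix lengths $t \le c_3k/\log k$; both yield a bound of the form $\mathrm{poly}(k)\cdot k^{-\Theta(1/c_3)}$, which is below $0.001$ for $c_3$ sufficiently small.
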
 
 
 \begin{proof} 
 Call an input $(x,y)$ {\sl bad} if $\pi_{(x,y)}$ queries an $x$-variable before $c_3k$ many $y$-variables.  Let $\overline{\pi}$ denote the truncation of $\pi_{(x,y)}$ to its prefix before the first query to an $x$-variable.  By~\Cref{claim:small-influence}, we have that $\Inf_{y_j}(\Maj_k(y)_{\overline{\pi}})\le \frac1{100\sqrt{k}}$, and so the discrepancy between the number of $0$'s and $1$'s in $\overline{\pi}$ must be $\Omega(\sqrt{k})$.   Therefore, we can bound 
 \begin{align*} \Pr[(\bx,\by)\text{ is bad}] &\le \sum_{t=1}^{c_3k/\log k} \Prx_{\bb\sim \mathrm{Bin}(t,\frac1{2})}\big[|\bb-\lfrac{t}{2}| \ge \Omega(\sqrt{k})\big]  \\
 &\le \sum_{t=1}^{c_3k/\log k} e^{-\Theta(k/t)} \tag*{(Hoeffding's inequality)} \\
 &\le c_3k \cdot e^{-\Theta(\log k/c_3)} \ \le \ o_k(1). 
 \end{align*}
where the final inequality holds by choosing $c_3$ to be a sufficiently small constant. 
 \end{proof}

We are now ready to prove~\Cref{lem:lower-bound-error}.  Let $\xi : \zo^k \times \zo^\ell \to \zo$ be the indicator
\[ \xi(x,y) \coloneqq \Ind\Big[\text{$\pi_{(x,y)}$ queries an $x$-variable before at least $\frac{c_3k}{\log k}$ many $y$-variables}\Big].\] 
By~\Cref{cor:deep} we have that $\Pr[\xi(\bx,\by)=1] \le 0.001$.  If $(x,y)$ is such that $\Pr[\xi(x,y) =0]$, then either
\begin{enumerate}
\item $\ds |\pi_{(x,y)}|\ge \frac{c_3k}{\log k}$, or 
\item $\ds |\pi_{(x,y)}|< \frac{c_3k}{\log k}$ and $\pi_{(x,y)}$ does not query any $x$-variables. 
\end{enumerate} 
Since the fraction of inputs that follow any specific path of length $\ge c_3k/\log k$ is at most $2^{-c_3k/\log k}$, and the size of $T$ is $2^{\,c_2k/\log k}$ by assumption, choosing $c_2 = \frac1{2} c_3$ ensures that the fraction of inputs $(x,y)$ such that $\pi_{(x,y)}$ falls into the first case above (i.e.~$|\pi_{(x,y)}|\ge c_3k/\log k$) is at most $0.001$.

Therefore, at least a $0.998$ fraction of inputs $(x,y)$ are such that $\pi_{(x,y)}$ falls into the second case above.  For such $(x,y)$'s, we have that 
\begin{align*}
 \Pr[f_{\pi_{(x,y)}}=1] &\ge \Pr[\Tribes_\ell' = 1] =  0.499 \\
 \Pr[f_{\pi_{(x,y)}}=0] &\ge \Pr[\Tribes_\ell = 0] = 0.5,
 \end{align*}
 and so $\Pr[T_{\pi_{(x,y)}} \ne f_{\pi_{(x,y)}}] \ge 0.49$.  We conclude that $\Pr[T(\bx,\by)\ne f(\bx,\by)] \ge 0.998 \times 0.499 > 0.49$, which completes the proof of~\Cref{lem:lower-bound-error}.

\section{Revisiting and strengthening~\cite{BLT20}'s guarantees}
\label{section: realizable setting}
\cite{BLT20}'s work on the realizable setting analyzed the performance of a {\sl variant} of the top-down heuristics; their variant does not correspond to any impurity function $\mathscr{G}$.  

Consider $\BuildTopDownDT_\Inf$, defined in \Cref{fig:TopDownBLT20}.

\begin{figure}[H]
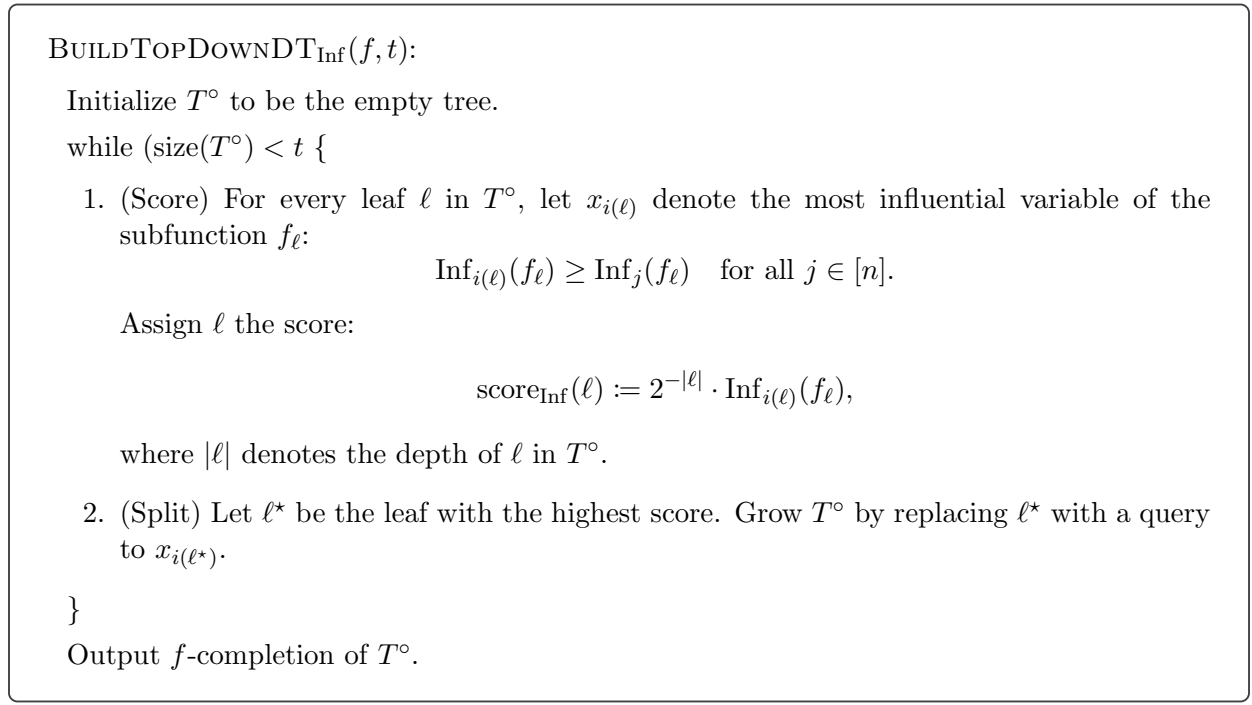

  \captionsetup{width=.9\linewidth}
\begin{tcolorbox}[colback = white,arc=1mm, boxrule=0.25mm]
\vspace{3pt} 

$\BuildTopDownDT_\Inf(f,t)$:  \vspace{6pt} 

\ \ Initialize $T^\circ$ to be the empty tree. \vspace{4pt} 

\ \ while ($\size(T^\circ) < t$ \{
\vspace{-3pt} 
\begin{enumerate}
\item {(Score)} For every leaf $\ell$ in $T^\circ$, let $x_{i(\ell)}$ denote the most influential variable of the subfunction $f_\ell$: 
\[ \Inf_{i(\ell)}(f_\ell) \ge \Inf_j(f_\ell) \quad \text{for all $j\in [n]$.} \] 
Assign $\ell$ the score: 
\begin{align*}
 \mathrm{score}_\Inf(\ell) &\coloneqq 2^{-|\ell|} \cdot \Inf_{i(\ell)}(f_\ell),
 \end{align*} 
 where $|\ell|$ denotes the depth of $\ell$ in $T^\circ$. 
 \item {(Split)} Let $\ell^\star$ be the leaf with the highest score.  Grow $T^{\circ}$ by replacing $\ell^\star$ with a query to $x_{i(\ell^\star)}$. 
\end{enumerate}
\ \ \}  \vspace{4pt} 

\ \ Output $f$-completion of $T^\circ$.
\vspace{3pt}

\end{tcolorbox}
\caption{\cite{BLT20}'s top-down heuristic for building a decision tree approximation for~$f$.}
\label{fig:TopDownBLT20}
\end{figure}

\begin{lemma}[Theorem 5 of \cite{BLT20}]
    \label{lemma: BLT 20 upper bound}
    For every $\varepsilon \in (0, \frac{1}{2})$ and monotone function $f: \{\pm 1\}^n \to \{0,1\}$ exactly computable by a size $s$ decision tree,
    \begin{align*}
        \TopDownError_\Inf(f, s^{O(\sqrt{\log s} / \varepsilon)}) \leq \varepsilon.
    \end{align*}
\end{lemma}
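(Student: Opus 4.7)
The plan is to mirror the proof of \Cref{thm:main}, replacing our $\Gimpurity_f$ potential with the influence-based potential that the paper explicitly attributes to~\cite{BLT20}:
\[ \Phi(T^\circ) \coloneqq \sum_{\text{leaves } \ell \in T^\circ} 2^{-|\ell|}\,\Inf(f_\ell). \]
First I would establish three properties of $\Phi$ that play the role of \Cref{lem:useful-properties}: (i) $\Phi(\text{empty tree}) = \Inf(f) \le \log s$, since the total influence of any Boolean function computable by a size-$s$ decision tree is at most its expected leaf depth, which is at most $\log s$; (ii) $\dist(f, T^\circ_f) \le O(\Phi(T^\circ))$, via the Poincar\'e inequality $\bias(g) \le 2\Var(g) = O(\Inf(g))$ applied at each leaf; and (iii) when $\BuildTopDownDT_\Inf$ splits leaf $\ell$ with its most-influential variable $x_{i(\ell)}$, the potential decreases by \emph{exactly} $\score_\Inf(\ell) = 2^{-|\ell|}\Inf_{i(\ell)}(f_\ell)$---a direct consequence of the identity $\tfrac12\Inf((f_\ell)_{x_i=1}) + \tfrac12\Inf((f_\ell)_{x_i=-1}) = \Inf(f_\ell) - \Inf_i(f_\ell)$ from decomposing influence under conditioning on~$x_i$.

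To lower-bound the per-step decrease, I would invoke the non-robust OSSS inequality in place of the robust \Cref{thm:JZ} used in the proof of \Cref{thm:main}: since $f$ is exactly computed by a size-$s$ tree $g$, every restriction $f_\ell = g_\ell$ is itself computed by a tree of size at most $s$, so $\Inf_{i(\ell)}(f_\ell) \ge \Var(f_\ell)/\log s$. Averaging over leaves and applying the same Cauchy--Schwarz chain as in the proof of \Cref{thm:main}, but now with the linear (rather than squared) dependence on $\Inf_{i(\ell)}$ afforded by property (iii), bounds the score of the split chosen at step $t$ in terms of the current $\dist(f,T^\circ_f)$ and the current tree size.

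The main obstacle is converting this per-step decrease into the claimed $s^{O(\sqrt{\log s}/\eps)}$ size bound, rather than the weaker $s^{O(\log s/\eps)}$ bound that a naive telescoping of the additive recursion $\Phi_{t+1} \le \Phi_t - \Omega(\eps/(t\log s))$ yields. The improvement, following~\cite{BLT20}, exploits the fact that $\Phi_0 \le \log s$ (so only $\log s$ worth of potential must be erased, rather than the~$1$ of the agnostic case) together with a phase-based accounting argument: one shows that $\Phi$ is halved inside every $s^{O(\sqrt{\log s})}$-sized block of iterations, and iterating $O(\log(\log s/\eps))$ such halvings drives $\Phi$ below~$O(\eps)$, at which point property (ii) yields $\dist(f, T^\circ_f) \le \eps$. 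I expect this phase-decomposition step---which converts the additive potential decrease into a multiplicative one---to be the main technical step; the remainder is a mechanical analogue of the argument in \Cref{sec:main}.
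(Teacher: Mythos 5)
Your skeleton matches the proof that \cite{BLT20} (and this paper's summary of it in \Cref{section: realizable setting}) actually uses: the potential $\Phi(T^\circ)=\sum_\ell 2^{-|\ell|}\Inf(f_\ell)$, the exact identity showing each split decreases $\Phi$ by precisely $\score_\Inf(\ell)$, the bound $\dist(f,T^\circ_f)=O(\Phi(T^\circ))$, and the OSSS-based lower bound $\score(\ell^\star)\ge \Omega(\eps/((j+1)\log s))$ on the score of the selected leaf. All of that is correct.

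The gap is in your step (i) and in the final accounting. The crucial ingredient --- and the only place monotonicity enters --- is the O'Donnell--Servedio bound \cite{OS07}: for \emph{monotone} $f$ computable by a size-$s$ decision tree, $\Inf(f)\le O(\sqrt{\log s})$. You instead invoke the generic bound $\Inf(f)\le \log s$ (true for all size-$s$ decision trees, monotone or not), and then try to recover the lost $\sqrt{\log s}$ factor via a ``phase-based halving'' argument. That substitute does not work: the per-iteration decrease you have established is $\Omega(\eps/(t\log s))$, a quantity proportional to $\dist(f,T^\circ_f)$, which is only \emph{upper}-bounded by $\Phi$; nothing forces the decrease to scale with the current value of $\Phi$, so there is no multiplicative contraction and no reason $\Phi$ halves within any fixed block of iterations. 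With the \cite{OS07} bound $\Phi_0\le O(\sqrt{\log s})$ in hand, no phase argument is needed at all: the same naive telescoping you describe gives $\frac{\eps\log t}{\log s}\ge \Phi_0 = O(\sqrt{\log s})$, hence $t= s^{O(\sqrt{\log s}/\eps)}$, whereas starting from $\Phi_0\le\log s$ it provably yields only $s^{O((\log s)/\eps)}$. So the missing idea is a specific structural theorem about monotone functions, not a cleverer amortization.
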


We will show that as a consequence of \Cref{lemma: BLT 20 upper bound}, a similar upper bound can be shown for $\BuildTopDownDT_\mathscr{G}$ for any strongly-concave $\mathscr{G}$. We first briefly summarize the proof of \Cref{lemma: BLT 20 upper bound} given in \cite{BLT20}. They define the following potential function:
\begin{align*}
    u_f(T^\circ) \coloneqq \sum_{\text{leaves $\ell \in T^\circ$}} 2^{-|\ell|} \cdot \Inf(f_\ell).
\end{align*} 
For monotone functions $f: \{\pm 1\}^n \to \{0,1\}$ that are computable by size-$s$ decision trees, a result of~\cite{OS07} gives the following bound: 
\begin{align*}
    u_f(\text{empty tree}) = \Inf(f) \leq \sqrt{\log s}.
\end{align*}
Furthermore, they show that if $T^\circ$ is the size-$(j+1)$ tree built after $j$ iterations of $\BuildTopDownDT_\Inf(f)$, and $\dist(f, T^\circ_f) \geq \varepsilon$, then the leaf, $\ell^\star$, selected in the next iteration satisfies.
\begin{align*}
    \score(\ell^\star) \geq \frac{\varepsilon}{(j+1) \log s}.
\end{align*}
Since $u_f(T^\circ)$ decreases by the score of the leaf selected, and $u_f(T^\circ)$ upper bounds $\dist(f, T^\circ_f)$,~\cite{BLT20} are able to conclude that a growing a decision tree of size
\begin{align*}
    2^{O(\sqrt{\log s } \cdot \log(s) / \varepsilon)} = s^{O(\sqrt{\log s}/\varepsilon)}
\end{align*}
suffices to ensure an error of $\le \eps$. Using this same proof outline, we will establish the same guarantee for $\BuildTopDownDT_\mathscr{G}$ for any strongly concave impurity function $\mathscr{G}$: 
\begin{theorem}[Extending Theorem 5 of~\cite{BLT20} to actual top-down heuristics]
    \label{thm: BLT1 for impurity}
    Let $\mathscr{G}$ be any $\kappa$-strongly concave impurity function, $\varepsilon \in (0, \frac{1}{2})$, and $f: \{\pm 1\}^n \rightarrow \{0,1\}$ be a monotone function computable by a size-$s$ decision tree. Then,
    \begin{align*}
        \TopDownError_\mathscr{G}(f, s^{O(\sqrt{\log s} / \eps)}/\kappa) \leq \eps.
    \end{align*}
\end{theorem}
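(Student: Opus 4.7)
The plan is to mirror the proof of \Cref{lemma: BLT 20 upper bound} sketched in the paragraph preceding the statement, keeping the same potential $u_f(T^\circ) = \sum_\ell 2^{-|\ell|} \cdot \Inf(f_\ell)$. Three properties transfer verbatim: $u_f(\text{empty tree}) = \Inf(f) \leq \sqrt{\log s}$ by the OS07 bound for monotone size-$s$ decision-tree functions, $u_f(T^\circ) \geq \dist(f, T^\circ_f)$ (since for monotone $g$, $\Inf(g) \geq \bias(g)$), and splitting leaf $\ell$ with $x_i$ decreases $u_f$ by exactly $2^{-|\ell|}\Inf_i(f_\ell)$. Crucially, the BLT20 existence claim---that some $(\ell^\star, i^\star)$ satisfies $2^{-|\ell^\star|}\Inf_{i^\star}(f_{\ell^\star}) \geq \delta_j := \eps/((j+1)\log s)$ whenever $\dist(f, T^\circ_f) \geq \eps$---depends only on $T^\circ$ and $f$, so it holds regardless of which algorithm grew $T^\circ$.

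The new ingredient is to lower bound the per-iteration drop in $u_f$ caused by $\BuildTopDownDT_\mathscr{G}$. First I would feed $(\ell^\star, i^\star)$ into \Cref{prop: score reduction} to obtain
\[
    \puritygain_f(T^\circ, \ell^\star, x_{i^\star}) \geq (\kappa/32) \cdot 2^{-|\ell^\star|} \Inf_{i^\star}(f_{\ell^\star})^2.
\]
Because $\BuildTopDownDT_\mathscr{G}$ is greedy on purity gain, the pair $(\hat\ell, \hat i)$ it actually splits inherits at least this much purity gain. To translate this back into a bound on the BLT20-score $2^{-|\hat\ell|}\Inf_{\hat i}(f_{\hat\ell})$ at $(\hat\ell, \hat i)$---which is the true drop in $u_f$---I would use a Lipschitz-style upper bound $\puritygain_f(T^\circ, \hat\ell, x_{\hat i}) \leq C_\mathscr{G} \cdot 2^{-|\hat\ell|}\Inf_{\hat i}(f_{\hat\ell})$, obtained by bounding the concavity gap $\mathscr{G}(\tfrac{a+b}{2}) - \tfrac{\mathscr{G}(a)+\mathscr{G}(b)}{2}$ by $\mathrm{Lip}(\mathscr{G})\cdot|b-a|/2$ and identifying $|b-a|$ with $\Inf_{\hat i}(f_{\hat\ell})$ via \Cref{prop:influence-correlation-monotone}.

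Chained together these lower bound the per-step drop in $u_f$ by $\Omega(\kappa/C_\mathscr{G}) \cdot \Inf_{i^\star}(f_{\ell^\star}) \cdot \delta_j$; in the ``nice'' case $\Inf_{i^\star}(f_{\ell^\star}) = \Omega(1)$ this is $\Omega(\kappa\,\delta_j)$, and telescoping $\sum_{j=1}^t \delta_j \gtrsim \sqrt{\log s}$ in the same way as in BLT20 yields $t \leq s^{O(\sqrt{\log s}/\eps)}/\kappa$ as claimed. The hardest step is the complementary regime $\Inf_{i^\star}(f_{\ell^\star}) \ll 1$: the naive chain then delivers only an $\Omega(\kappa\,\delta_j^2)$ drop, whose telescoped sum is absolutely bounded and cannot erode the initial potential $\sqrt{\log s}$. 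Overcoming this apparently requires a finer case analysis---noting that in that regime $\ell^\star$ must be shallow (since $2^{-|\ell^\star|} \geq \delta_j/\Inf_{i^\star} \gg \delta_j$), which tightens the purity-gain upper bound at $(\hat\ell,\hat i)$---or an amortized accounting across iterations that restores an effective linear-in-$\delta_j$ drop on average, at which point the BLT20 telescoping argument goes through unchanged.
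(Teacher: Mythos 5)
There is a genuine gap, and in fact two. First, the ``Lipschitz-style'' upper bound $\puritygain_f(T^\circ,\hat\ell,x_{\hat i}) \le C_\mathscr{G}\cdot 2^{-|\hat\ell|}\Inf_{\hat i}(f_{\hat\ell})$, which is the hinge of your conversion from purity gain back to the BLT20 score, is false for impurity functions covered by the theorem. For $\mathscr{G}(p)=2\sqrt{p(1-p)}$ (Kearns--Mansour, $\kappa=1$), take $a=0$ and $b=\eta$: the Jensen gap is $\mathscr{G}(\eta/2)-\tfrac{1}{2}\mathscr{G}(\eta)=(\sqrt{2}-1)\sqrt{\eta}(1+o(1))$, which is $\Theta(\sqrt{|b-a|})$, not $O(|b-a|)$. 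Strong concavity gives only a \emph{lower} bound on the Jensen gap; no upper bound linear in $|b-a|$ follows from \Cref{def:impurity}, so this step would need an additional smoothness hypothesis not present in the statement. Second, even granting that step, you concede that in the regime $\Inf_{i^\star}(f_{\ell^\star})\ll 1$ the per-iteration drop in $u_f$ is only $\Omega(\kappa\,\delta_j^2)$, whose sum over $j$ is $O(1)$ and cannot erode the initial potential $\sqrt{\log s}$; the proposed repairs (a depth-based case split, or amortization) are gestured at but not executed, and it is exactly here that the argument would have to do real work.

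The paper avoids both problems by not re-running the potential-drop telescoping under the $\mathscr{G}$-greedy order at all. It observes that by \Cref{prop: score reduction} both algorithms split the \emph{same variable} at any given leaf, so the two runs differ only in the order in which leaves are split, and splitting extra leaves never increases error (\Cref{prop:error-goes-down}). BLT20's analysis is then used as a black box: it suffices for $\BuildTopDownDT_\mathscr{G}$ to eventually split every leaf whose score exceeds the threshold $\tau = 1/s^{O(\sqrt{\log s}/\eps)}$. Any such leaf has purity gain at least $\tfrac{\kappa}{32}\tau^2$ by \Cref{lem:useful-properties}, while any leaf the $\mathscr{G}$-greedy rule prefers over it has purity gain at most $2^{-|\ell^\star|}$ (since $\mathscr{G}\le 1$) and hence depth $O(\log(1/(\kappa\tau^2)))$; there are only $s^{O(\sqrt{\log s}/\eps)}/\kappa$ such nodes, so the high-score leaf is split within that many iterations. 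Note that the only ``upper bound on purity gain'' the paper ever needs is the trivial one, which is why it never encounters the Lipschitz issue.
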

\begin{proof}Recalling~\Cref{prop: score reduction}, for any leaf $\ell \in T^\circ$, the variable with maximum influence in~$f_\ell$ will also be the variable that results in the maximum purity gain when split. Therefore, at any leaf, $\BuildTopDownDT_\mathscr{G}$ will always split the same variable as $\BuildTopDownDT_\Inf$, but may just choose a different order of leaves to split. Splitting extra leaves can only decrease the error, so once $\BuildTopDownDT_\mathscr{G}$ has split every leaf that $\BuildTopDownDT_\Inf$ would in $s^{O(\sqrt{\log s}/\varepsilon)}$ iterations, the resulting tree must have error less than $\varepsilon$.
    
    We know that running $\BuildTopDownDT_\Inf$ for $s^{O(\sqrt{\log s}/\varepsilon)}$ iterations is sufficient to ensure an error of at most $\eps$. Let $T^\circ$ be the size-$(j+1)$ tree built after $j$ iterations of $\BuildTopDownDT_\Inf(f)$. \cite{BLT20} proved that if $\dist(f, T^\circ_f) \geq \varepsilon$, then the leaf, $\ell^\star$, selected in the next iteration satisfies: 
    \begin{align*}
        \score(\ell^\star) \geq \frac{\varepsilon}{(j+1) \log s}.
    \end{align*}
    Therefore, we can substitute in $j = s^{O(\sqrt{\log s}/\varepsilon)}$ to find that if $\BuildTopDownDT_\Inf$ has split all leaves with score at least
    \begin{align*}
        \frac{\varepsilon}{s^{O(\sqrt{\log s}/\varepsilon)} \cdot\log s} = \frac{1}{s^{O(\sqrt{\log s}/\varepsilon)}},
    \end{align*}
    the tree it has built must have error $\varepsilon$. We will show that $\BuildTopDownDT_\mathscr{G}$ will not take too long to split any leaf with score at least $1 / s^{O(\sqrt{\log s}/\varepsilon)}$, therefore proving an upper bound on the number of iterations it needs to reach error $\varepsilon$. Let $\ell$ be any leaf with score at least $1 / s^{O(\sqrt{\log s}/\varepsilon)}$ and $i$ be the index of its most influential variable. Then, 
    \begin{align*}
        \puritygain_f(T^\circ,\ell,x_i) &\geq 2^{|-\ell|} \cdot \frac{\kappa}{32} \cdot \Inf_i(f)^2 \tag*{(\Cref{lem:useful-properties})}\\
        & \geq \frac{\kappa}{32} \cdot (2^{|-\ell|} \cdot \Inf_i(f))^2 \\
        & = \frac{\kappa}{32} \cdot  \score(\ell)^2 \\
        &\geq  \frac{\kappa}{s^{O(\sqrt{\log s}/\varepsilon)}}.
    \end{align*}
     Let $\ell^\star$ be some leaf with larger purity gain that $\ell$, and $i^\star$ be the associated variable that is split. Then, since $\puritygain_f(T^\circ,\ell^*,x_{i^\star}) \leq 2^{-|\ell^\star|}$, we must have that $|\ell^\star| < \log(s^{O(\sqrt{\log s}/\varepsilon)}/\kappa)$. There are at most $s^{O(\sqrt{\log s}/\varepsilon)}/\kappa$ possible such nodes, so after constructing a tree of size $s^{O(\sqrt{\log s}/\varepsilon)}/\kappa$, $\BuildTopDownDT_\mathscr{G}$ must split $\ell$, and we conclude that it must have achieved error at most~$\varepsilon$.
\end{proof}

\section{Conclusion} 

We have given strengthened provable guarantees on the performance of widely employed and empirically successful top-down decision tree learning heuristics such as ID3, C4.5, and CART.  Compared to previous works, our guarantees: (1) hold in the more realistic and challenging agnostic setting; (2) apply to all top-down heuristics and their associated impurity functions; (3) extend to the setting of real-valued features and arbitrary product distributions over the domain.  Our main result shows that for all monotone functions $f : \R^n \to \zo$ and $s \in \N$, these top-down heuristics build a tree of size $s^{\tilde{O}((\log s)/\eps^2)}$ that achieves error within $\eps$ of that of the optimal balanced size-$s$ decision tree for $f$.   We complement this with a near-matching lower bound.  While our work was primarily motivated by the goal of understanding top-down heuristics, our results yield new guarantees that are not known to be achievable by any other algorithm, even ones that are not based on top-down heuristics.   

There are several concrete avenues for future work:  
\begin{enumerate}
\item {\sl Beyond monotonicity.} As mentioned in the introduction, any top-down heuristic will fare badly on the parity functions $f$, in the sense of building a tree that is much larger than the optimal tree for $f$. Though broad and natural, the class of monotone functions is not the only class that excludes the parity function. Another fundamental property to consider is {\sl noise stability} (see \S2.4 of~\cite{ODbook})---what guarantees can be made about the performance of these top-down heuristics when run on noise-stable functions? 
\item {\sl Beyond product distributions.} In this work our results hold for arbitrary product distributions over the domain, extending previous work that focuses on the uniform distribution. Could we establish provable distribution-independent guarantees, or failing that, perhaps provable guarantees for distributions with limited dependencies between coordinates?
\item {\sl Polynomial-size approximating trees.} Our lower bound (\Cref{thm:lower-bound}) shows a monotone function such that any top-down heuristics has to build a tree of size $s^{\tilde{\Omega}(\log s)}$ in order to achieve error $\le \opt_s + \eps$.~\cite{BLT20} show a similar lower bound of $s^{\tilde{\Omega}(\sqrt[4]{\log s})}$ in the realizable setting. Are there broad and natural {\sl subclasses} of monotone functions that evade these lower bounds, and for which polynomial size upper bounds do exist?

\end{enumerate} 

\section*{Acknowledgements} 

We thank Michael Kim and the ICML reviewers for their helpful feedback and suggestions.   LYT is supported by NSF grant CCF-192179 and NSF CAREER award CCF-1942123.

% In the unusual situation where you want a paper to appear in the
% references without citing it in the main text, use \nocite
%\nocite{langley00}

\bibliography{most-influential}
\bibliographystyle{alpha}

\end{document}